\global\let\AddToReset=\@addtoreset}
\newcommand{\mcD}{{\mycal D}}
\newcommand{\nn}{\nonumber}
\newcommand{\zglorentz}{{ \mathring{\glorentz}}}
\newcommand{\glorentz}{{ {\mathbf g}}}
\newcommand{\nablariem}{{ {D}}}
\newcommand{\griem}{{ {\mathfrak g}}}
\newcommand{\mcV}{{\mycal V}}
\DeclareFontFamily{OT1}{rsfs}{}
\DeclareFontShape{OT1}{rsfs}{m}{n}{ <-7> rsfs5 <7-10> rsfs7 <10->
rsfs10}{} \DeclareMathAlphabet{\mathscr}{OT1}{rsfs}{m}{n}
\newcommand{\mcM}{\mathscr M}
\newcommand{\GmcV}{ {G_\mcV}}
\newcommand{\GW}{ {G_W}}
\newcommand{\eq}[1]{\eqref{#1}}
\newcommand{\Eq}[1]{Equation~\eqref{#1}}
\newcommand{\bel}[1]{\begin{equation}\label{#1}}
\newcommand{\beal}[1]{\begin{eqnarray}\label{#1}}
\newcommand{\beaa}{\begin{eqnarray}}
\newcommand{\bean}{\begin{eqnarray}\nn}
\newcommand{\eeal}[1]{\label{#1}\end{eqnarray}}
\newcommand{\eea}{\end{eqnarray}}
\newcommand{\eeaa}{\end{eqnarray*}}
\newcommand{\be}{\begin{equation}}
\newcommand{\ee}{\end{equation}}
\DeclareFontFamily{OT1}{rsfs}{}
\DeclareFontShape{OT1}{rsfs}{m}{n}{ <-7> rsfs5 <7-10> rsfs7 <10->
rsfs10}{} \DeclareMathAlphabet{\mycal}{OT1}{rsfs}{m}{n}
\newcounter{mnotecount}[section]
\newcommand{\rmnote}[1]{}
\newcommand{\Ric}{\operatorname{Ric}}
\renewcommand{\sharp}{\#}
\newcommand{\Hess}{\operatorname{Hess}}
\newcommand{\Tr}{\operatorname{Tr}}
\newcommand{\R}{\mathbb R}
\newcommand{\mbbS}{\mathbb S}
\renewcommand{\to}{\rightarrow}
\renewcommand{\div}{\operatorname{div}}
\renewcommand{\hat}{\widehat}
\def\crn#1#2{{\vcenter{\vbox{
        \hbox{\kern#2pt \vrule width.#2pt height#1pt
           }
          \hrule height.#2pt}}}}
\newcommand{\newF}{\lambda}
\renewcommand{\H}{\mathbb H}
\renewcommand{\hbar}{{\overline h}}
\newcommand{\pre}[2]{{{\vphantom{#2}}^{#1}}\kern-.2ex{#2}}
\theoremstyle{plain}
\newtheorem{theorem}{\sc Theorem}[section]
\newtheorem{Proposition}[theorem]{\sc Proposition}
\theoremstyle{definition}
\newtheorem{remark}[theorem]{\sc  Remark\rm}
\numberwithin{equation}{section}
\date{\today}
\begin{document}

\title[Stationary spacetimes with negative $\Lambda$] {Non-singular spacetimes with a negative cosmological constant: III.  Stationary solutions  with matter fields}

\thanks{Preprint UWThPh-2016-30}

\author[P.T. Chru\'sciel]{Piotr T.~Chru\'sciel}

\address{Piotr
T.~Chru\'sciel, Faculty of Physics, University of Vienna, Boltzmanngasse 5, A1090 Wien, Austria}
\email{piotr.chrusciel@univie.ac.at} \urladdr{http://homepage.univie.ac.at/piotr.chrusciel/}

\author[E.  Delay]{Erwann
Delay} \address{Erwann Delay, Avignon Universit\'e, Laboratoire de Math\'ematiques d’Avignon (EA 2151)
F-84916 Avignon}
\email{Erwann.Delay@univ-avignon.fr}
\urladdr{http://www.math.univ-avignon.fr}

\author[P. Klinger]{Paul Klinger}

\address{Paul Klinger,  Faculty of Physics, University of Vienna, Boltzmanngasse 5, A1090 Wien, Austria}
\email{paul.klinger@univie.ac.at}

\begin{abstract}
We construct infinite-dimensional families of non-singular stationary
space times, solutions of {Yang-Mills-Higgs}-Einstein-Maxwell-Chern-Simons-dilaton-scalar field equations with a negative cosmological constant. The families include  an infinite-dimensional family of solutions with the usual AdS conformal structure at conformal infinity.
\end{abstract}

\maketitle

\tableofcontents

\section{Introduction}
 \label{section:intro}

There is currently considerable interest in the literature in space-times with a negative cosmological constant. This is fueled on one hand by studies of the AdS-CFT conjecture and of the implications thereof. On the other hand, these solutions are interesting because of a rich dynamical morphology: existence of periodic or quasi-periodic solutions, and of instabilities. All this leads naturally to the question of existence of stationary solutions of the Einstein equations with $
\Lambda<0$, with or without sources, and of properties thereof. Several families of such solutions have been recently constructed numerically~\cite{BjorakerHosotani,DiasHorowitzSantos,AstefaneseiRadu,BBSLMR,ToriiMaedaNarita,HerdeiroRadu,MaliborskiRostworowski}; cf.\ also~\cite{AFT} for a rigorous construction of static solutions of the Einstein-Vlasov equations with $\Lambda<0$.

In an accompanying paper~\cite{ChDelayEM} two of us (PTC and ED) have constructed an infinite dimensional family of  non-singular static
space times, solutions of the Einstein-Maxwell equations with a negative cosmological constant. These families include  an infinite-dimensional family of solutions with the usual AdS conformal structure at conformal infinity. The object of this work is to generalise the construction there to obtain a similar large family of singularity-free \emph{stationary} solutions of the \emph{Yang-Mills-Higgs-Einstein-Maxwell-dilaton-Chern-Simons-scalar field}
 equations, including a class of boson-star solutions with stationary metric but periodic complex scalar field.  We also show that our methods can be used to obtain solutions of a class of $f(R)$-theories.

Note that existence of such solutions of the Einstein-Maxwell-Yang-Mills-dilaton field equations with the  Kaluza-Klein value of the coupling constant  is a special case of the results in~\cite{ACD2}.

More precisely, we construct \emph{strictly} stationary solutions of the Einstein-matter field equations with a negative cosmological constant and with a smooth conformal boundary for a large class of matter models. Here we say that a space-time $(\mcM,\glorentz)$  is
strictly stationary if there exists on $\mcM$ a Killing vector field which is timelike everywhere. Such a solution is defined to be \emph{non-degenerate} if a certain operator associated with the linearisation  of the field equations is an isomorphism, cf.\ Section~\ref{s2XII16.1} for a precise definition. An example of a non-degenerate solution is anti-de Sitter space-time. Our solutions are constructed using an implicit function theorem near a non-degenerate vacuum metric  $(\mcM,\zglorentz)$.
We also construct solutions with a time-periodic complex scalar field accompanied by time-independent metric and Maxwell fields.
The solutions are uniquely determined by certain freely prescribable coefficients in the asymptotic expansion of the metric, of the Yang-Mills or Maxwell fields, of the dilaton field and of the scalar fields. Here uniqueness is guaranteed in a neighborhood of the metric $\zglorentz$. In this way we obtain infinite dimensional families of solutions with, if desired, the same conformal structure at infinity as the initial static vacuum metric $\zglorentz$.

By switching-off some free data at the conformal boundary, or setting to zero one of the coupling constants, one can obtain non-trivial solutions of the Einstein-Yang-Mills equations, or Einstein-scalar field equations, or static Einstein-Maxwell-Chern-Simons-dilaton solutions, etc.
In particular we establish rigorously
existence of Einstein-Yang-Mills solutions in  near-AdS configurations, as constructed numerically in~\cite{BjorakerHosotani,BML}, and in fact we provide a much larger family of such solutions.

The method is a conceptually-straightforward repetition of the arguments in~\cite{ChDelayStationary,ChDelayEM}, so that our presentation will be suitably sketchy: we will only provide details at places which require technical or calculational changes.

Our hypothesis of \emph{strict} stationarity excludes black hole solutions. The extension of our analysis to black holes will be discussed elsewhere~\cite{CDKprep}.

A similar construction works near any non-degenerate stationary solutions of the equations under consideration, provided that the linearisations of the matter equations lead to isomorphisms. This last property appears to require a case-by-case analysis of the solutions at hand.

\section{Stationary Einstein-Maxwell-Chern-Simons-dilaton-scalar field equations in $n+1$ dimensions}
 \label{s18XI16.1}

We consider the Einstein equations for a metric
$$
 \glorentz
  = g_{\mu\nu}dx^\mu dx^\nu
$$
in space-time dimension $n+1$, $n\ge 3$,
\bel{11X16.1}
\Ric(\glorentz ) -\frac{{}\Tr_{ \glorentz }  \Ric(\glorentz )}2 {} \glorentz +\Lambda \glorentz  = 8 \pi G T
 \,,
\ee
where $T$ is the energy-momentum tensor of matter fields. A constant rescaling of $\glorentz $ allows one  to normalise a negative
cosmological constant to
\bel{16XI16.11}
\Lambda=-\frac{n(n-1)}2
 \,,
\ee
and we will often use this normalisation. The space-time manifold $\mcM$ will be taken of the form $\R\times M$, with the $\R$ coordinate running along the orbits of a Killing vector field which is \emph{timelike everywhere}.

In the Einstein-Maxwell-Chern-Simons-dilaton-scalar field case we have~\cite{Kim:2016dik,BKNR}
\beal{21XI16.1+}
 \lefteqn{
    T_{\alpha\beta}=\frac{1}{8\pi G}\big[\frac{1}{2}\partial_\alpha \phi \partial_\beta \phi+2 W(\phi)F_{\alpha\mu}F_\beta{}^\mu
 }
&&
\\
 &&
 \phantom{xxxx}
 - \glorentz _{\alpha\beta}\big(\frac{1}{4}(\nabla \phi)^2+\frac{W(\phi)}{2}|F|^2+\frac{1}{2}{\mycal V}(\phi)\big)\big]\,,
 \nn
\eea
with action
 \bel{21XI16.2}
 S=\int d^{n+1}x \frac{\sqrt{-\det{}\glorentz }}{16\pi G}\left[R(\glorentz)-2\Lambda-W(\phi)|F|^2-\frac{1}{2}(\nabla\phi)^2-{\mycal V}(\phi)\right] +S_{\mathrm{CS}}\,,
\ee
where $R(\glorentz)$ is the Ricci scalar of the metric $\glorentz$ and {where, in odd space-time dimensions, $S_{\mathrm{CS}}$ is the Abelian Chern-Simons action:
\bel{8XII16.2}
 S_{\mathrm{CS}} = \left\{
                     \begin{array}{ll}
                       0, & \hbox{$n$ is odd;} \\
 \displaystyle
                       \frac{\lambda }{16\pi G}\int A\wedge \underbrace{F\wedge \cdots \wedge F}_{k \text{ times}}, & \hbox{$n=2k$,}
                     \end{array}
                   \right.
\ee
for a constant $\lambda \in \R$.
}
We will assume that $W$ and ${\mycal V}$ are smooth functions, and require
\bel{20XI16.16}
 W(0) = 1
 \,,
 \quad
 {\mycal V}(0) = 0
\ee
(note that this differs from the conventions of the accompanying paper~\cite{ChDelayEM}, where $\phi \equiv 0$ and where the normalisation $W\equiv 1/2$ has been used).

We can view $\phi$ as taking values in a Euclidean $\R^{N+1}$ for some $N\ge 0$, with the first component $\phi^1$ corresponding to the dilaton field, and with $W$ depending only upon $\phi^1$. Then the remaining components $(\phi^2, \ldots,\phi^{N+1})$ of $\phi$  describe $N$ minimally-coupled scalar fields, possibly interacting with each other through the potential $\mcV$ which might or might not depend upon $\phi^1$.

Taking $N=0$, $\mcV\equiv 0$, $\phi =  2 u$, $W(u)= e^{-2a u}$ for a constant $a
 \in \R$, and setting the Chern-Simons coupling constant $\lambda$ to zero  one obtains the usual Einstein-Maxwell-dilaton equations with action~\cite{GHT}
\bel{21XI16.2+}
 S=\frac{1}{16\pi G}\int d^{n+1}x \sqrt{-\det{}\glorentz }\left[R-2\Lambda - e^{-2a u}|F|^2-{2}(\nabla u)^2\right]\,.
\ee

Similarly, we can view $F$ as taking values in a Euclidean $\R^{N_1}$ for some $N_1\ge 0$, in which case we obtain a collection of Abelian Yang-Mills fields
$F^B_{\mu\nu}dx^\mu \wedge dx^\nu $, $B=1,\ldots, N_1$.
The Chern-Simons action \eq{8XII16.2} can then be replaced by
\bel{5I17.1}
 S_{\mathrm{CS}} = \left\{
                     \begin{array}{ll}
                       0, & \hbox{$n$ is odd,} \\
 \displaystyle
                       \frac{1 }{16\pi G}\lambda_{B B_1 \ldots B_k}\int A^B\wedge F^{B_1}\wedge \cdots \wedge F^{B_k} , & \hbox{$n=2k$,}
                     \end{array}
                   \right.
\ee
for a set of constants $\lambda_{B B_1 \ldots B_k}$, totally symmetric in the last $k$ indices.

Our  analysis extends to general Yang-Mills-Higgs-dilaton-Chern-Simons fields in the obvious way, by replacing $\partial \phi$ by a gauge-covariant derivative. This is addressed in Section~\ref{ss2I17.3} below.

\section{Definitions, notations and conventions}
 \label{s2XII16.1}

Our definitions and conventions are identical to those in~\cite[Section~2]{ChDelayStationary}. In particular $\rho$ denotes a non-negative smooth function which has nowhere vanishing gradient near $\partial M$ and vanishes precisely at $\partial M$.

{Recall that the linearisation of the Ricci tensor in dimension $(n+1)$ equals~\cite[Equations~(1.180a)-(1.180b), p.~64]{besse:einstein}
\bel{31XII16.11}
 \frac 12 \big( \Delta_L h - 2 \delta^*\delta h - D d (\mathrm{tr} h) \big)
 \,,
\ee
or in index notation
$$
 \frac 12 \big( \Delta_L h_{ij}+D_i D^k h_{kj}+D_jD^k h_{ki}-D_iD_j h^k{}_k)
 \,,
$$
where $\Delta_L h $ is
the  Lichnerowicz Laplacian  acting on the symmetric
two-tensor field $h$,  defined
as~\cite[\S~1.143]{besse:einstein}
$$
\Delta_Lh_{ij}=-D^kD_kh_{ij}+R_{ik}h^k{_j}+R_{jk}h^k{_i}-2R_{ikjl}h^{kl}\,.
$$

An explicit form of $\Delta_L  $ for a metric of the form \eq{gme1} below can be read-off from the formulae in~\cite[Appendix~A]{ChDelayStationary}.
}

We will say that a metric $\griem$ is {\it non-degenerate} if $\Delta_L+2n$ has no
$L^2$-kernel. Large classes of non-degenerate Einstein
metrics are described in~\cite{Lee:fredholm,ACD2,mand1,mand2}.

\section{Method}
 \label{ss18XI16.2}

We seek to construct Lorentzian metrics ${}\glorentz $ in any
space-dimension $n\geq 3$, with Killing vector
$X=\partial/\partial t$. In adapted coordinates those metrics can
be written as
\beal{gme1} &\glorentz  = -V^2(dt+\underbrace{\theta_i
dx^i}_{=\theta})^2 + \underbrace{g_{ij}dx^i dx^j}_{=g}\,, & \\ &
\partial_t V = \partial_t \theta = \partial_t g=0\,.
\eeal{gme2}
Let us denote by $\varphi=(\varphi^a)$ all matter fields, where the index $a$ runs over some index set $\{1,\ldots,N_m\}$, $1\le N_m < \infty$. The $\varphi^a$'s will be required to satisfy
\bel{8XII16.11}
 \partial_t \varphi = 0
\ee
in the coordinate system of \eq{gme1}, except in Section~\ref{ss19XII16.1}, respectively Section~\ref{ss11I17.1},
 where time-periodic matter field configurations are considered with static, respectively stationary, metrics. In the case of the action \eq{21XI16.2} we thus have $\varphi=(A_\mu, \phi^a)$, but the overall argument applies to more general systems as long as the energy-momentum tensor is at least quadratic in the fields.
%

Consider the Einstein equations \eq{11X16.1}, in space-time dimension $n+1$, with a cosmological constant $\Lambda$. We impose \eq{gme1}-\eq{8XII16.11}, and assume that the energy-momentum tensor $T$ does not depend upon more than one derivative of $\glorentz$. We further suppose that
\bel{8XII16.12}
 \mbox{
\emph{whenever the matter field equations are satisfied we have $\nabla_\mu T^\mu{}_\nu = 0$,}}
\ee
regardless of whether or not the metric $\glorentz$ satisfies \eq{11X16.1}. (We use the symbol $\nabla$ to denote the covariant derivative of the metric $\glorentz$.)  The following approach has become standard since~\cite{ChBActa} in this kind of problems, we review the method for completeness. We assume that the fields $\varphi$ satisfy a system of equations of the form
\bel{8XII16.13}
 L( \glorentz,\partial\glorentz,\partial^2\glorentz)[\varphi]=0
 \,,
\ee
where $L$ is a partial-differential operator acting on $\varphi$ with coefficients which might depend upon $\glorentz$ and its derivatives up to order two.

In order to obtain an elliptic system of equations for $(V,\theta,g)$ we replace \eq{11X16.1} by
\bel{8XII16.14}
    R(\glorentz )_{\mu\nu} -\frac{R( \glorentz )} 2 {} \glorentz_{\mu\nu}  +\Lambda \glorentz_{\mu\nu} + \nabla_\mu \Omega_\nu + \nabla_\nu \Omega_\mu
    - {\nabla^\alpha \Omega_\alpha \glorentz _{\mu\nu}}
    = 8 \pi G T_{\mu\nu}
 \,,
\ee
where
\begin{eqnarray}
 \label{8XII16.15}
 \Omega_\nu
&:=&
\glorentz_{\nu\mu}\glorentz^{\alpha \beta} (\Gamma(\mathring{\glorentz})^\mu_{\alpha\beta}-\Gamma(\glorentz)^\mu_{\alpha\beta}  )
\end{eqnarray}
(cf.\ Appendix~\ref{sA4I17.1} below).
Assume that \eq{8XII16.13}-\eq{8XII16.14} can be solved for $(V,\theta,g,\varphi)$. The Bianchi identity and \eq{8XII16.12} imply that
\bel{8XII16.16}
  \nabla^\mu(\nabla_\mu \Omega_\nu + \nabla_\nu \Omega_\mu
    - {\nabla^\alpha \Omega_\alpha \glorentz _{\mu\nu}} ) = 0
 \,.
\ee
We show in Appendix~\ref{A29XII16.1} that this equation implies  {$\Omega  \equiv 0$ whenever $|\Omega|_\griem=o(\rho^{-1})$} (as will be the case for our solutions), and consequently we will obtain the desired solution of the original equations.

For the Einstein-scalar field equations this is the end of the story, provided we can construct solutions of  the system \eq{8XII16.13}-\eq{8XII16.14}. This will be done using the implicit function theorem around the solution $\phi\equiv 0$, $\glorentz = \mathring \glorentz$. Now, the associated linearised equations are rather complicated, in particular the question of invertibility of the linearisation of the modified Einstein equations is not a trivial issue. This is solved in~\cite{ACD,ACD2,ChDelayStationary,ChDelayEM} by the following artefact, which we apply again  here: It is well known that the Einstein tensor for a Riemannian metric
$$
 \griem = V^2 dt^2 + g
$$
on $S^1\times M$ coincides with the Einstein tensor of $\glorentz = -V^2 dt^2 + g $. This implies that the isomorphism property of the linearised operator for the {``harmonically reduced Riemannian Einstein equations''}, at a static solution, carries over to the Lorentzian equations; {compare~\cite[Section~3 and Appendix~A]{ChDelayStationary}.}
Hence, our hypothesis of non-degeneracy of $\griem$ together with the implicit function theorem can be used to obtain solutions of the Lorentzian equations, provided that suitable isomorphism theorems can be established for the matter equations.  This will be done for the equations at hand in Sections~\ref{ss18XI16.1} and \ref{ss18XI16.2+} below.

In cases involving Maxwell fields there will arise an issue related to gauge freedom for Maxwell fields  which will be addressed in a somewhat similar manner to the addition of the $\Omega$-terms to the Einstein tensor: To render \eq{8XII16.13} well posed we will add to it a ``gauge-fixing'' term $\sigma_F$, which will have to be shown to vanish.
For definiteness we consider the equations resulting from \eq{21XI16.2}:
\beal{10XII16.1}
  \lefteqn{
  \nabla^\mu (W(\phi)(\nabla_\nu A_\mu - \nabla_\mu A_\nu))
  }
 &&
\\
 \nn
 &&
 \equiv -\nabla^\mu (W \nabla_\mu A_\nu )-W R^\alpha{}_\nu A_\alpha + W' \nabla^\mu \phi \nabla_\nu A_\mu + W \nabla_\nu
    \underbrace{\nabla^\mu A_\mu }_{=: \sigma_F}
\\
 &&
= -\lambda \epsilon_{\nu \mu_1 \ldots \mu_{2k}}
  F^{\mu_1 \mu_2} \ldots F^{\mu_{2k-1}\mu_{2k}}
 \,,
 \nn
\eea
and note that the divergence of the rightmost term above vanishes.
We will show that we can solve the equation obtained by setting $\sigma_F$ to zero in \eq{10XII16.1}:
\beal{10XII16.2}
  &&-\nabla^\mu (W \nabla_\mu A_\nu )-W R^\alpha{}_\nu A_\alpha + W' \nabla^\mu \phi \nabla_\nu A_\mu
\\
 &&
 \phantom{xxxxxxx}
    = -\lambda \epsilon_{\nu \mu_1 \ldots \mu_{2k}}
  F^{\mu_1 \mu_2} \ldots F^{\mu_{2k-1}\mu_{2k}}
 \,.
 \nn
\eea
Equivalently,
\beal{10XII16.3}
  \lefteqn{
  \nabla^\mu (W(\phi)(\nabla_\nu A_\mu - \nabla_\mu A_\nu))
  }
 &&
\\
 &&
=  - W \nabla_\nu
    \nabla^\mu A_\mu -\lambda \epsilon_{\nu \mu_1 \ldots \mu_{2k}}
  F^{\mu_1 \mu_2} \ldots F^{\mu_{2k-1}\mu_{2k}}
 \,.
 \nn
\eea
Since the divergence of the left-hand side vanishes, we obtain
\beal{10XII16.4}
 \nabla ^\nu( W \nabla_\nu
    \underbrace{ \nabla^\mu A_\mu}_{=:\sigma_F} ) =0
 \,.
\eea
It follows e.g.\ from~\cite[Theorem~3.3]{ChDelayEM} that if $\nabla^\mu A_\mu \to_{\rho\to0} 0$, then $\nabla^\mu A_\mu \equiv 0$, so that the solution of \eq{10XII16.3} solves \eq{10XII16.1}. This implies that the energy-momentum tensor is divergence-free, and we conclude as in the case without Maxwell fields.

{For some purposes it is convenient to replace \eq{8XII16.14} with its equivalent form
\bel{31XII16.2}
 R_{\alpha\beta} =  T_{\alpha\beta}
 -
 \nabla_\alpha \Omega_\beta - \nabla_\beta \Omega_\alpha
 + \frac{2\Lambda  - \glorentz^{\mu\nu} T_{\mu\nu}}{n-1} \glorentz_{\alpha\beta}
 \,.
\ee
We note that the linearisation at $\mathring\glorentz$ of the $\Omega$-contribution above is
\bel{31XII16.32}
\mathring\glorentz^{\mu\nu} (
 \mathring \nabla_\alpha    \mathring \nabla_\mu h _{\beta \nu } +
  \mathring \nabla_\beta     \mathring \nabla_\mu h _{\alpha \nu })
   -
   \mathring \nabla_\alpha  \mathring \nabla_\beta (\mathring\glorentz^{\mu\nu}  h _{\mu \nu })
 \,,
\ee
which cancels exactly the non-$\Delta_L$ terms in \eq{31XII16.11}.

\section{Static {metrics}}
 \label{ss18XI16.1}
 \label{s11XII16.1}

In this section we present the construction of static solutions of the equations at hand. Strictly speaking, the results in this section are a special case of those in Section~\ref{ss18XI16.2+}, but it appears instructive to present them separately, taking into account that the analysis here is computationally less demanding than the general case.

Assuming staticity, in adapted coordinates the metric $\glorentz $ becomes
\bel{18XI16.1}
 \glorentz  = -V^2 dt^2 + g
 \,,
  \quad
  \partial_t V = 0 = \partial_t g
 \,.
\ee
Equations~\eq{11X16.1} lead to the following set of equations,
where we denote by $\nablariem$ the covariant derivative of $g$ (recall that $\nabla$ is the covariant derivative of $\glorentz$), and where $R_{ij}$ is the Ricci tensor of $g$:
\beal{7XI16.2}
 &
 \label{15XI16.21}
  \displaystyle
 R_{ij} = V^{-1} \nablariem _i\nablariem _j V + \frac{2\Lambda}{n-1} g_{ij} + 8\pi G\left( T_{ij} - \frac {\Tr_{{}\glorentz } T } {n-1} g_{ij}\right)
 \,,
 &
\\
  &
  \displaystyle
 \phantom{xxx}
  V \nablariem ^i \nablariem _i V = 8\pi G V^2 \left( T_{\alpha \beta} N^\alpha N^\beta + \frac{\Tr_{{}\glorentz } T } {n-1}
   \right)
   - V^2 \frac{2 \Lambda}{n-1}
  \,,
  \quad
   T_{0i}=0
  \,,
  &
\eeal{14XI16.1}
where $N^\alpha\partial_\alpha$ is the $\glorentz$-unit timelike normal to the level sets of $t$.
Choosing $\Lambda$ as in \eq{16XI16.11}, taking into account the Maxwell equations and the scalar field equations, together with
\bel{4XII16.1}
 \partial_t F_{\mu\nu} =0 =  \partial_t \phi
 \,,
\ee
one is led to the system
%
%
\begin{equation}
 \label{22XI16.1}
\left\{\begin{array}{l} V( { -\Delta_g}  V + n V) = - 2W(\phi)F_{0i}F_0{}^i + \frac{V^2}{n-1}({\mycal V}(\phi) - W(\phi) |F|^2)
\,,\\
    R_{ij} + ng - V^{-1}\nablariem _i\nablariem _jV = \frac{1}{2}\partial_i\phi \partial_j\phi+2F_{i\alpha}F_j{}^\alpha W(\phi) + \frac{g_{ij}}{n-1}({\mycal V}(\phi) - W|F|^2)\,,
\\
\frac{1}{V\sqrt{\det g}}\partial_\mu (V \sqrt{\det g} W(\phi) F^{\mu\nu})+ B_{\mathrm{CS}}^\nu=0\,,
\\
\frac{1}{V\sqrt{\det g}}\partial_i (V \sqrt{\det g}g^{ij}\partial_j \phi) - W'(\phi) |F|^2 - {\mycal V}'(\phi)=0
 \,,
\\
W(\phi) F_{0j}F_i{}^j=0
 \,,
\end{array}\right.
\end{equation}
where  $W'$ and $\mcV'$ are understood as differentials of $W$ and $\mcV$ when $\phi$ is $\R^{N+1}$ valued with $N\ge1$, and where the Chern-Simons source-term $B_{\mathrm{CS}}^\nu$ is given by
%
%
\bel{8XII16.3}
 \displaystyle
B_{\mathrm{CS}}^\nu = \left\{
                     \begin{array}{ll}
                       0, & \hbox{$n$ is odd;}
 \\
 \displaystyle
                  -  \frac{ \lambda }{2^{k+2}}
\epsilon^{\nu\alpha_1\beta_1\cdots\alpha_k\beta_k} F_{\alpha_1\beta_1} \cdots F_{\alpha_k\beta_k}, & \hbox{$n=2k$.}
                     \end{array}
                   \right.
\ee

\subsection{Purely electric configurations}
 \label{ss2I16.1}
One way of satisfying the last equation in \eq{22XI16.1} is to assume a purely electric Maxwell field:
\bel{4XI116.5}
 F=d(Udt)
 \,,
 \quad
  \partial_t U =0
  \,.
\ee
(Purely magnetic configurations will be considered in Section~\ref{ss2I17.2} below, while configurations with both electric and magnetic fields can be obtained by applying duality rotations to the Maxwell field at the end of the construction.)
\Eq{4XI116.5} leads to the following form of \eq{22XI16.1}:
\begin{equation}
 \label{18XI16.3}
\left\{\begin{array}{l}
 V( { -\Delta_g}  V+n V)=-\frac {2(n-2)}{n-1}  W(\phi)|dU|_g^2   + \frac{1}{n-1} V^2{\mycal V}(\phi)
 \,,
\\
    \Ric(g)+n g-V^{-1}\Hess_gV= \frac 12 d\phi \otimes d\phi
      + \frac{1}{n-1} {\mycal V}(\phi)g
     \\
     \phantom{xxxxxxxxxxxxxxx}
      +2 W(\phi)V^{-2}\left(-dU\otimes dU+\frac{1}{n-1}|dU|_g^2g\right)\,,
\\
\div_g (V^{-1} W(\phi) \nablariem  U)=0\,,
\\
V^{-1}\div_g  (V D \phi )  + 2 V^{-2} |dU|^2_g W'(\phi) - {\mycal V}'(\phi)=0
\end{array}\right.
\end{equation}
(the Chern-Simons term drops out because the purely spatial components of $F$ vanish).

When $\phi\equiv 0$ the $U$-equation coincides with that in~\cite{ChDelayEM}, thus Theorem~3.3 there with $ s=-1$ applies. By continuity it still applies for all fields $\phi$ which are sufficiently small in $ C^{k+1,\alpha}_\epsilon $, with any $\epsilon>0$.
This motivates us to seek again solutions with $U$ of the form
\bel{4XI16.1}
 U = \hat U + O(\rho)
 \,,
\ee
where $\hat U$ is smooth-up-to-boundary on $\overline M$. (Here two comments are in order: First, the key information is contained in the function $\hat U|_{\partial M}$ defined on the boundary, but it is useful to invoke a function $\hat U$ defined on $\overline M$, which avoids the issue of considering extensions to $M$ of functions defined on $\partial M$. Next, the uniqueness part of our analysis below implies that solutions with
$\hat U |_{\partial M}= c$ for some constant $c\in \R$ lead  to configurations with $U\equiv c$, hence trivial Maxwell fields. {In other words, in \eq{4XI16.1} only $\hat U$ modulo constants matters as far as physically relevant fields are concerned. Nevertheless,  different $c$'s lead indeed to different fields $U$.)

We will seek a solution such that
\bel{11I17.1}
 \mbox{ $\phi\to 0$ as $\rho\to 0$.}
\ee
Note that with the choice \eq{4XI16.1} the  coefficient  $  2 V^{-2} |dU|^2_g$ appearing in the $\phi$-equation will be $O(\rho^4)$.
Assuming \eq{11I17.1}} we must have $\mcV'(0)=0$, so that
in the scalar case
the indicial exponents for the $\phi$-equation (cf., e.g.,~\cite{Lee:fredholm})
will be solutions of the equation
\bel{1XII16.4}
 \sigma(\sigma-n)- \mcV''(0) = 0
 \quad
 \Longleftrightarrow
 \quad
  \sigma_\pm  = \frac{ n}2 \pm \sqrt{\frac{n^2  }{4} + \mcV''(0)}
  \,.
\ee
When $\phi$ is $\R^{N+1}$ valued, with $N\ge 1$, we obtain a collection of indicial exponents, with $\mcV''(0)$ in \eq{1XII16.4} replaced by the eigenvalues of $\mcV''(0)$.
Now, to ensure useful properties of the operator associated with the equation for $\phi$ we need all $\sigma_\pm$ to be  real with $\sigma_+\ne \sigma_-$, which leads to the condition
\bel{5XII16.2}
    -\frac {n^2}4 < \mcV''(0)
 \,,
\ee
understood as a matrix inequality for the Hessian of $\mcV$ at $\phi=0$ when $N\ge 1$. After diagonalising $\mcV''(0)$, each component
in the diagonalising basis of the solutions will have the asymptotic behaviour
\bel{5XII16.1}
 \phi = \hat \phi \rho^{\sigma_-} + o(\rho^{\sigma_-})
 \,.
\ee
possibly with $ \hat \phi |_{\partial M} \equiv 0$, and note that we have
\bel{21XII16.22}
\phi\to_{\rho\to 0} 0\ \mbox{and} \   \hat \phi |_{\partial M} \not\equiv 0
 \quad
    \Longrightarrow
 \quad
 \big(
 \sigma_->0
    \
 \Leftrightarrow
    \
 \mcV''(0)<0
    \big)
  \,.
\ee

The properties of solutions of the $\phi$-equation depend now upon whether or not $W'(0)=0$.

Let us first  assume that $W'(0)=0$  and that $\mcV''(0)$
is not an $L^2$-eigenvalue of the operator $\Delta_{\mathring\griem}$ (see Remark~\ref{R24III17.1} below).

By Theorem~\ref{Tisofunction}, Appendix~\ref{sA16III17.1} with $s=1$ the operator
$$
 \phi \mapsto \mathring V^{-1}\div_{\mathring g}(\mathring V D\phi)-\mcV''(0)\phi
$$
that arises by linearising the $\phi$ equation is an isomorphism from
$C^{k+2,\alpha}_{\delta}$ to $C^{k,\alpha}_{\delta}$ for $\delta\in(\sigma_-,\sigma_+)$.}
We will then have a non-trivial solution $\phi\not \equiv 0$ tending to zero when $\rho\to 0$ if and only if
$\hat \phi |_{\partial M} \not \equiv 0
$.
Since the case $\mcV''(0)\ge0$  leads then to solutions which do not tend to zero at the conformal boundary, the hypothesis that $\mcV''\ge 0$ and $W'(0)=0$ leaves us with Maxwell matter fields only.

Assuming that $\mcV''(0)< 0$ and $W'(0)=0$, it remains to check that the  source terms in the remaining equations are compatible with the isomorphism ranges of the relevant operators.
For this  it is convenient to rewrite the $V$-equation as
\bel{21XII16.21}
  { -\Delta_g}  V+(n + \frac{1}{n-1} {\mycal V}(\phi))  V=-\frac {2(n-2)}{n-1} V^{-1} W(\phi)|dU|_g^2
 \,.
\ee
Since the coefficient $\frac{1}{n-1} {\mycal V}$ goes to zero at the boundary when $\phi$ does, we obtain the same indicial exponents as when $\phi \equiv 0$, and thus again an isomorphism for $\phi$ small enough. No new conditions arise from the remaining equations either.

Summarising, for $\hat \phi|_ {\partial M} \ne 0$ in view of \eq{5XII16.2} we must have
\bel{5XII16.2+}
    - {n^2} < 4\mcV''(0) < 0
 \,.
\ee
If $\phi$ has more than one component then the above inequalities apply with $\mcV''(0)$ replaced by the relevant eigenvalue of the Hessian of $\mcV(0)$ (in this case it is convenient to work in a diagonalising basis).

If $W'(0)\ne 0$ the situation is different, as then the scalar fields $\phi$ are driven both by the term $ 2 V^{-2} |dU|^2_g W'(0)$ and by $\hat \phi|_{\partial M} $. First, if $\hat U|_{\partial M} \equiv 0$, then $U\equiv 0$, and if $\hat \phi|_{\partial M}$ vanishes as well, then $\phi \equiv 0$, and we are in vacuum. On the other hand,  if $\hat \phi|_{\partial M} \equiv 0$ and $\hat U|_{\partial M} \not \equiv 0$ then we have non-trivial solutions with the following asymptotic behaviour:
\bel{18XII16.11}
 \phi =\left\{
         \begin{array}{ll}
           O(\rho^{\sigma_+}) & \hbox{if $\sigma_+< 4$;} \\
           O( \rho^{4}\ln \rho) & \hbox{if $\sigma_+=4$;} \\
           O(\rho^{4}) & \hbox{if $\sigma_+> 4$ .}
         \end{array}
       \right.
\ee
An analysis similar to that of the case $W'(0)=0$, taking into account that $\sigma_+>n/2$ under \eq{5XII16.2}, shows that non-trivial $\phi$'s tending to zero at the boundary will be obtained when $\hat U|_{\partial M} \not \equiv 0$ if, in the one-component scalar field case,
\bel{5XII16.3}
 \left\{
    \begin{array}{ll}
          -{n^2} < 4\mcV''(0) < 0, \ \mbox{or } & \\
           \mcV''(0)  \ge 0  \ \mbox{and $\hat \phi|_{\partial M} \equiv 0$.} & \
    \end{array}
  \right.
\ee

 The question of  stability of the solutions with $\mcV''(0)<0$ is unclear (compare~\cite{ToriiMaedaNarita}, and there are in fact hints that some solutions with $\mcV''(0)<0$ might be stable~\cite{BHJM}), but this is irrelevant from the point of view of the question existence of static or stationary solutions, which is our main interest in this work.

{One can now proceed exactly as in~\cite{ChDelayEM}
to obtain the following: Consider the field equations associated with the action \eqref{21XI16.2} for time-independent fields, with
\bel{8XII16.1}
 W(0)=1
 \,,
  \quad
  \mcV(0)=0
   =
  \mcV'(0)
  \,,
   \quad
   \mcV''(0)> -n^2/4
  \,.
\ee
Let
$$
 \mathring \glorentz = -\mathring V^2 dt^2 + \mathring g
$$
be smoothly conformally compactifiable at $\R\times \partial M$ and satisfy the vacuum Einstein equations with a negative cosmological constant.
Then:

\begin{Proposition}
   \label{P21XII16.1}
Suppose that  $(S^1\times M, \mathring V^2 dt^2 +\mathring g)$ is nondegenerate, and  that $\mcV''(0)$
is not an $L^2$-eigenvalue of $\Delta_{\mathring \griem}$ (cf.\ Remark~\ref{R24III17.1} below). Under \eq{8XII16.1}, assume that
\begin{enumerate}
  \item
$
\mcV''(0) < 0
$
with  $\hat U$ and  $\hat \phi$ which are smooth functions on $\overline M$ sufficiently close to zero, or

  \item
$\hat \phi \equiv 0$ and  $\hat U$ is a smooth function  on $\overline M$ sufficiently close to zero.
\end{enumerate}

Then there exists a static solution of the equations with $U$ as in \eq{4XI16.1} and $\phi$ as in \eq{5XII16.1}.
The solutions are uniquely determined, within the class of static solutions belonging to some neighbourhood of $\mathring \glorentz$, by $\hat U|_{\partial M} $
   and $\hat \phi|_{\partial M} $, with all fields having a polyhomogeneous expansion at $\partial M$.
\qed
\end{Proposition}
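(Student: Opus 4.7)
The plan is to apply the implicit function theorem to the modified system consisting of \eq{8XII16.14}, the Maxwell equation \eq{10XII16.3} with the $\sigma_F$-gauge, and the scalar field equation, linearised around the background $(\mathring V, \mathring g, U\equiv 0, \phi \equiv 0)$. Because of the trick noted in Section~\ref{ss18XI16.2} --- the Einstein tensor of $\mathring V^2 dt^2 + \mathring g$ on $S^1 \times M$ coincides with that of $-\mathring V^2 dt^2 + \mathring g$ --- the metric block of the linearised operator reduces to the harmonically-reduced Riemannian Einstein operator, whose principal part is $\tfrac12(\Delta_L + 2n)$ by \eq{31XII16.11}--\eq{31XII16.32}; this is an isomorphism on the appropriate weighted Hölder spaces by the non-degeneracy hypothesis on $\griem = \mathring V^2 dt^2 + \mathring g$.

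Next I would verify that the full linearised system decouples into diagonal blocks at $(U,\phi)=(0,0)$: since $W(0)=1$, $W'(0)$ only enters the $\phi$-equation through a source quadratic in $dU$ which vanishes at the background, and $T_{\mu\nu}$ is at least quadratic in the matter fields, the metric and matter perturbations are not coupled at linear order. The $U$-block is exactly the operator handled by Theorem~3.3 of \cite{ChDelayEM} with $s=-1$, and the $\phi$-block is $\mathring V^{-1}\div_{\mathring g}(\mathring V\,D\,\cdot\,) - \mcV''(0)$, which is an isomorphism from $C^{k+2,\alpha}_\delta$ to $C^{k,\alpha}_\delta$ for $\delta \in (\sigma_-,\sigma_+)$ by Theorem~\ref{Tisofunction} (the hypotheses ensure $\sigma_\pm$ are real and distinct and $\mcV''(0)$ is not an $L^2$-eigenvalue of $\Delta_{\mathring\griem}$). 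Thus the linearised modified system is an isomorphism in the relevant weighted spaces.

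The implicit function theorem then produces, for all $(\hat U|_{\partial M}, \hat\phi|_{\partial M})$ sufficiently close to zero, a unique nearby solution $(V, g, U, \phi)$ of the modified system with the asymptotic behaviour \eq{4XI16.1} and \eq{5XII16.1}. In case (1), where $\mcV''(0)<0$, both $\hat U$ and $\hat\phi$ may be prescribed independently; in case (2), where $\hat\phi \equiv 0$, the source $2V^{-2}|dU|^2_g W'(0) = O(\rho^4)$ in the $\phi$-equation is compatible with the isomorphism range and yields the behaviour \eq{18XII16.11}, so $\phi$ is driven entirely by $\hat U$ and tends to zero at $\partial M$.

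To finish, I would show that the modified solutions actually solve the original equations. The Bianchi identity together with the divergence-freeness of the Chern-Simons contribution and of the scalar-Maxwell energy-momentum tensor (which follows from the matter equations, as in \eq{8XII16.12}) gives \eq{8XII16.16} and \eq{10XII16.4}; since the solutions obtained satisfy $|\Omega|_\griem = o(\rho^{-1})$ and $\nabla^\mu A_\mu \to 0$ as $\rho\to 0$, the uniqueness results invoked in Section~\ref{ss18XI16.2} (Appendix~\ref{A29XII16.1} for $\Omega$, \cite[Theorem~3.3]{ChDelayEM} for $\sigma_F$) imply $\Omega \equiv 0$ and $\nabla^\mu A_\mu \equiv 0$, recovering \eq{11X16.1} and \eq{10XII16.1}. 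Polyhomogeneity at $\partial M$ then follows by a standard bootstrap on the asymptotic expansions, exactly as in \cite{ChDelayEM}. The main technical obstacle is the verification of the isomorphism property for the scalar block under the stated conditions on $\mcV''(0)$, which is precisely what Theorem~\ref{Tisofunction} and the spectral hypothesis are designed to provide.
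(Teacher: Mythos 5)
Your proposal is correct and follows essentially the same route as the paper: an implicit function theorem applied to the $\Omega$-modified system, with the metric block handled via the Riemannian trick and non-degeneracy of $\Delta_L+2n$, the $U$-block via Theorem~3.3 of \cite{ChDelayEM} with $s=-1$, the $\phi$-block via Theorem~\ref{Tisofunction} together with the spectral hypothesis on $\mcV''(0)$, and the gauge terms removed afterwards through Appendix~\ref{A29XII16.1}. The only superfluous element is the $\sigma_F$-machinery: for the purely electric ansatz $A=U\,dt$ with $\partial_t U=0$ one has $\nabla^\mu A_\mu\equiv 0$ identically, so the Maxwell equation reduces directly to the scalar equation $\div_g(V^{-1}W(\phi)\nablariem U)=0$ and no gauge-fixing is needed (that device is only required in the magnetic and stationary cases).
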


If $\hat \phi |_{\partial M} \equiv 0 $ and $W'(0)=0$ then $\phi\equiv 0$, so that we obtain the solutions of Einstein-Maxwell equations already constructed in~\cite{ChDelayEM}.

The reader will find some more information about the asymptotics of the fields in Section~\ref{s10I17.1}.

We emphasise that uniqueness is in the gauges implicitly defined above; for instance,  two solutions with $U$-fields differing by a constant are considered distinct, even though they define of course the same Maxwell field $F_{\mu\nu}$. Furthermore, uniqueness is  up-to diffeomorphisms which are {the identity at the boundary}  in any case. {It is conceivable that uniqueness holds for arbitrary diffeomorphisms, but this does not follow from our arguments.}

\subsection{Purely magnetic configurations}
 \label{ss2I17.2}
Another way to satisfy the last equation in \eqref{22XI16.1} is to consider purely magnetic fields, i.e.
\begin{equation}
    F=d(A_i dx^i)\,,\qquad \partial_t A_i=0\,.
\end{equation}
This implies $
B_{\mathrm{CS}}^i=0$
and leads to the following  matter equations,
\begin{equation}\label{matterequations_static_mag}
\left\{\begin{array}{l}
 \nablariem _j (V W g^{jk}g^{il}(A_{l,k}-A_{k,l}))=0\,,\\
 V^{-1}\nablariem _i (V g^{ij} \partial_j \phi) - 2 W'(\phi) (A_{j,i}-A_{i,j})g^{ik}g^{jl}A_{l,k} - {\mycal V}'(\phi)=0\,,\\
 B_\mathrm{CS}^0=0 \,,
\end{array}\right.
\end{equation}
with the Einstein equations taking now the form
\begin{equation}\label{mainequations_static_mag}
\left\{\begin{array}{l}
V(-\Delta_g V + n V) = \frac{V^2}{n-1}({\mycal V} - 2W  W'(\phi) (A_{j,i}-A_{i,j})g^{ik}g^{jl}A_{l,k})\, ,\\
\begin{split}
    R_{ij}+n g_{ij}-V^{-1}\nablariem _i\nablariem _j V=&
     \frac{1}{2}(\partial_i \phi)(\partial_j \phi)+2W (A_{k,i}-A_{i,k})(A_{l,j}-A_{j,l})g^{kl}
\\
    &+\frac{g_{ij}}{n-1}({\mycal V}-2W W'(\phi) (A_{j,i}-A_{i,j})g^{ik}g^{jl}A_{l,k})\, .
\end{split}
\end{array}\right.
\end{equation}
To satisfy the last equation of \eqref{matterequations_static_mag} one might as well assume that the Chern-Simons coupling constant $\lambda $ vanishes.
The first line of \eqref{matterequations_static_mag} can be rewritten, after introducing $\sigma_F$ as in \eq{10XII16.4}, in the form
\beal{21XII16.2}
 \lefteqn{\phantom{xxx} 0 =
\nablariem _j (V W g^{jk}g^{il}(\nablariem _kA_{l }))}
 &&
\\
 \nn
 && \phantom{xx}-  \nablariem ^k (V W  ) \nablariem ^i A_{k}
- VW R^i{}_\ell A^\ell -   V W   \nablariem ^i \left(\sigma_F -V^{-1} A^k \nablariem_ kV \right)
 \,.
\eea
If we  develop (\ref{21XII16.2}) and drop $\sigma_F$, the operator acting on $A$
becomes
$$
 VW B(A)_i+VD^kW(D_kA_i-D_iA_k)+WA^kD_iD_kV-VW R_{ik}A^k
\,,
$$
where $B$ is the operator of Lemma A.3 in~\cite{ChDelayStationary}.
The operator
$$
\mathcal P:=B+(V^{-1}DDV-\Ric(g))
$$
appears as part of the $(n+1)$-dimensional Riemannian Hodge Laplacian $D_{\griem}^*D_{\griem}+\Ric(\griem)$~\cite{ChDelayStationary}.
Recall that, in coordinates,  the characteristic indices for
$D_{\griem}^*D_{\griem}+\Ric(\griem)\sim D_{\griem}^*D_{\griem}-n$ belong to $\{-1, n-1\}$ in the normal direction and to $\{0, n-2\}$ in the tangential one (cf., e.g.~\cite[Section 2.3]{AubryGuillarmou}).
We deduce from Theorem C(c) and Corollary 7.4 of~\cite{Lee:fredholm} that if there are no harmonic forms in  $L^2$   for   {($S^1\times M,\mathring\griem)$},
then $\mathcal P$ 
  will be an isomorphism from $C^{k+2,\alpha}_\delta$ to
$C^{k,\alpha}_\delta$ for
\bel{indicesisoA}
\left|\delta-\frac n2\right|<\frac n 2-1
 \,.
\ee

Let us denote by
$$(\rho,x^a)
$$
 local coordinates near $\partial M$. From what has been said
one expects solutions to take the form
\bel{21XII16.1}
 A = \big( \hat A_\rho \rho^{-1} + O(1)\big) d\rho
  +  \big( \hat A_a  + O(\rho)\big) dx^a
 \,,
\ee
with smooth-up-to-boundary functions $\hat A_i$. As discussed in detail at the end of  Section~\ref{ss18XI16.2}, we will have $ \sigma_F\equiv 0$ if and only if
 $\nabla^\mu A_\mu$ tends to zero as $\rho$ tends to zero. Now
\bel{20XII16.1}
 \nabla^\mu A_\mu=\frac{\partial_i(g^{ij}\sqrt{\det{g}}V A_j)}{\sqrt{\det g}V}=  \partial_\rho(\rho^{1-n} A_\rho)\rho^{1+n}
 +O(\rho)
 \,.
\ee
This will be satisfied  if and only if $\hat A_\rho |_{\partial M} \equiv 0$, without any restrictions on $\hat A_a|_{\partial M} dx^a $. We conclude that:

\begin{Proposition}
   \label{P21XII16.2}
{Under the hypotheses of Proposition~\ref{P21XII16.1}, suppose moreover that the Einstein metric $(S^1\times M, \mathring V^2 dt^2 + \mathring g)$  has no harmonic one-forms which are in $L^2$.  Then the conclusions of}
Proposition~\ref{P21XII16.1} hold with $U$ replaced by $A_i dx^i$ of the form \eq{21XII16.1} with $\hat A_\rho |_{\partial M} \equiv 0$, so that  $\hat U|_{\partial M} $ is replaced by $\hat A_a|_{\partial M}  dx^a$.
\qed
\end{Proposition}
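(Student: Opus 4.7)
My plan is to repeat the implicit function theorem argument sketched in the excerpt and in Proposition~\ref{P21XII16.1}, with the purely electric Maxwell potential $U$ replaced by a one-form potential $A$. The unknowns are $(V,g,\phi,A)$ and the linearisation point is the vacuum static solution $(\mathring V,\mathring g,0,0)$. As in Section~\ref{ss18XI16.2}, one does not solve the gauge-field equation from \eqref{matterequations_static_mag} directly but rather the equation obtained after dropping the gauge-fixing term $\sigma_F=\nabla^\mu A_\mu$ in the rewriting \eqref{21XII16.2}; the principal operator acting on $A$ is then exactly $\mathcal P = B+(V^{-1}DDV-\Ric(g))$ at the vacuum background, with lower-order perturbations that vanish to sufficient order as $\rho\to 0$ and as $(\phi,A)\to(0,0)$.

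The first step is an isomorphism statement for each linearised operator. The Einstein and scalar-field linearisations are identical to those treated in Proposition~\ref{P21XII16.1}: non-degeneracy of $(S^1\times M,\mathring V^2 dt^2+\mathring g)$ and the hypothesis that $\mcV''(0)$ avoid the $L^2$-spectrum of $\Delta_{\mathring\griem}$ give isomorphisms on the appropriate weighted Hölder spaces $C^{k,\alpha}_\delta$. For the new piece, namely the operator on $A$, the quoted results from~\cite{Lee:fredholm} together with the indicial analysis summarised after \eqref{21XII16.2} show that, in the absence of $L^2$-harmonic one-forms on the Riemannian background $(S^1\times M,\mathring\griem)$, $\mathcal P$ is an isomorphism $C^{k+2,\alpha}_\delta\to C^{k,\alpha}_\delta$ in the range \eqref{indicesisoA}. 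Choosing $\delta$ in this range, the asymptotic form \eqref{21XII16.1} with $\hat A_\rho|_{\partial M}\equiv 0$ sits inside the domain of $\mathcal P$, with the freely prescribable boundary datum $\hat A_a|_{\partial M}dx^a$ playing the same role that $\hat U|_{\partial M}$ played in the electric case.

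With these isomorphisms in hand, I would apply the implicit function theorem exactly as in~\cite{ChDelayEM} and in the proof of Proposition~\ref{P21XII16.1} to solve the modified system: for boundary data $(\hat A_a|_{\partial M}dx^a,\hat\phi|_{\partial M})$ small in a suitable norm, one obtains a unique nearby solution $(V,g,\phi,A)$, with all fields polyhomogeneous at $\partial M$. The necessary check on source terms is routine: the Einstein and $\phi$ sources on the right-hand sides of \eqref{mainequations_static_mag}--\eqref{matterequations_static_mag} are at least quadratic in $A$ and $\phi$ and contain only first derivatives of $A$, so with $A$ of the form \eqref{21XII16.1} (and $\hat A_\rho|_{\partial M}\equiv 0$) these sources decay fast enough in $\rho$ to land in the ranges of the Einstein and scalar isomorphisms. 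The final step is to remove the gauge-fixing terms: the $\Omega$-identity from Appendix~\ref{A29XII16.1} gives $\Omega\equiv 0$ as in the electric case, while for the Maxwell gauge one checks by \eqref{20XII16.1} that $\nabla^\mu A_\mu=O(\rho)$ so that $\sigma_F\to 0$ at $\partial M$; \eqref{10XII16.4} (with the uniqueness result cited from~\cite[Theorem~3.3]{ChDelayEM}) then forces $\sigma_F\equiv 0$.

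The step I expect to be delicate is not the invertibility of $\mathcal P$, which is handed to us by Lee's theorem under the no-$L^2$-harmonic-one-forms hypothesis, but rather the bookkeeping of indicial exponents and weights: one must verify that the weight $\delta$ in \eqref{indicesisoA} is simultaneously compatible with the decay of the quadratic-in-$A$ sources feeding into the Einstein, scalar and gauge equations, with the indicial range $(\sigma_-,\sigma_+)$ appearing in the $\phi$-equation, and with the condition $\hat A_\rho|_{\partial M}\equiv 0$ that ensures the vanishing of $\sigma_F$. Once these ranges are checked to overlap (which they do because the sources are quadratic and the relevant operators have overlapping isomorphism windows), the implicit function theorem closes and polyhomogeneity follows from the standard boundary regularity argument for the resulting semilinear elliptic system.
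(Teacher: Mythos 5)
Your proposal follows essentially the same route as the paper: identify the linearised operator on $A$ with $\mathcal P=B+(V^{-1}DDV-\Ric(g))$, invoke Lee's Theorem C(c) and Corollary 7.4 under the no-$L^2$-harmonic-one-forms hypothesis to obtain the isomorphism on the weight range \eqref{indicesisoA}, impose $\hat A_\rho|_{\partial M}\equiv 0$ so that \eqref{20XII16.1} forces $\sigma_F\to 0$ and hence $\sigma_F\equiv 0$, and close with the implicit function theorem as in the electric case. The only point the paper addresses that you omit is the residual Chern-Simons constraint $B_{\mathrm{CS}}^0=0$ in \eqref{matterequations_static_mag}, which for purely magnetic fields is handled by taking the coupling constant $\lambda$ to vanish (or $n$ odd).
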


As discussed in Appendix~\ref{s29XII16.2}, it follows from~\cite{CarronPedon} that there are no $L^2$ harmonic one-forms on $S^1\times M$
equipped with the Riemannian  counterpart of the AdS metric $\mathring\griem$, so the same is true
for nearby metrics.

Further remarks concerning asymptotics and total energy are to be found in Section~\ref{s10I17.1} below.

\subsection{Yang-Mills-Higgs fields}
 \label{ss2I17.3}
The analysis so far readily generalises to Yang-Mills-Higgs-Chern-Simons fields.  Here one often assumes that the Lie algebra $\mathfrak G$ of the structure group $G$ admits a positive-definite scalar product, but this is not needed in our considerations.

We denote by $A=A_\mu dx^\mu$ the Yang-Mills connection, with $A_\mu$ taking values in $\mathfrak G$, and by $F=\frac 12 F_{\mu\nu}dx^\mu \wedge dx^\nu$ its curvature:
$$
  F_{\mu\nu}  = \partial_\mu A_\nu - \partial_\nu A_\mu +  [A_\mu,A_\nu]
  \,.
$$
The scalar fields $\phi$ are allowed to be coupled to $A$ in the usual way, with derivatives involving $\phi$ replaced by gauge-covariant derivatives
$$
 \partial_\mu \phi \mapsto \mcD_\mu \phi:=\partial_\mu \phi + T(A_\mu) \phi
 \,,
$$
where $T$ is the linear map determined by the relevant representation; e.g., if  $\phi$ are sections of a bundle associated to the adjoint representation, then $T(A_\mu)\phi= [A_\mu,\phi]$. 

We suppose that the $\mathfrak G$-valued current vector $j^\nu$ appearing in the Yang-Mills equations,
\begin{eqnarray}
 \label{2I17.5}
&
 \mcD_\mu F^{\mu\nu}:= \nabla_\mu F^{\mu\nu}+ [A_\mu, F^{\mu\nu}] = j^\nu
 \,,
&
\\
&
 \mbox{
depends only upon $A$, $\partial A$, $\phi$, $\partial \phi$, $g$, $\partial g$,}
&
\\
& \mbox{is at least quadratic in all the fields,}
&
 \label{3I17.1}
\eea
 and satisfies the obvious compatibility conditions arising from
\bel{2I17.5+}
 \mcD_\nu \mcD_\mu F^{\mu\nu} =  \mcD_{[\nu} \mcD_{\mu]} F^{\mu\nu} = [F_{\mu\nu},F^{\mu\nu}] = 0
 \quad
  \Longrightarrow
  \quad
   \mcD_\nu j^\nu = 0
  \,.
\ee
Equivalently
\bel{2I17.6}\nabla_\nu (j^\nu - [A_\mu, F^{\mu\nu}]) = 0
  \,.
\ee
More precisely, we will need that
\bel{2I16.7}
\mbox{  $\mcD_\nu j^\nu = 0$
whenever the field equations for $\phi$ are satisfied.}
\ee
This will be automatically satisfied by currents arising from gauge-invariant Lagrange functions, and by the current arising from Chern-Simons terms~\cite{ChernSimons}, whether Abelian or non-Abelian.

Let us, first, assume that the Yang-Mills principal bundle, say $P(G)$, is trivial, so that $F$ is globally defined as a two-form. The case where $A = U dt$, $\partial_t U =0$, where $U$ is $\mathfrak G$-valued, works exactly as in the Maxwell case, leading immediately to an obvious Yang-Mills equivalent of Proposition~\ref{P21XII16.1} whenever $P(G)= G\times M$.

A purely-magnetic Yang-Mills potential, $A_0\equiv 0$, $\partial_t A_i \equiv0$, does not require much more work.
Under our conditions, \eq{matterequations_static_mag}-\eq{mainequations_static_mag} are only modified by terms which are at least quadratic in the fields and which are lower order in terms of derivatives. Such terms, when small enough in relevant norms, do not affect the argument: One can view $F$ as a collection of several electric fields, introduce a vector-valued gauge-source function $\sigma_F$ (one such function for each component of $F$), and conclude as before. In other words, assuming \eq{2I17.5}-\eq{3I17.1} and  \eq{2I16.7}, we have established the Yang-Mills equivalent of Proposition~\ref{P21XII16.2} for trivial $G$-bundles.
This establishes, for small $A$, existence of the solutions constructed numerically in~\cite{BjorakerHosotani}, and in fact provides a much larger family of such solutions. (An even larger family of such solutions results from Theorem~\ref{T27XII16.1} below with $\hat \varphi|_{\partial M}=0=W'(0)$.)

Summarising, we have proved:

\begin{Proposition}
  \label{P6III17.1}
 The conclusions of Propositions~\ref{P21XII16.1} and \ref{P21XII16.2} hold when Maxwell and scalar fields are replaced by  Yang-Mills and Higgs  fields on a trivial gauge bundle.
\qed
\end{Proposition}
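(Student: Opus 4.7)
The plan is to reduce the Yang-Mills-Higgs case to the Maxwell-scalar case treated in Propositions~\ref{P21XII16.1} and \ref{P21XII16.2} by viewing the $\mathfrak{G}$-valued potential $A$ as a finite collection of Abelian-type fields, one per element of a basis of $\mathfrak{G}$, with the non-Abelian bracket terms entering only as higher-order corrections. Since $P(G)=G\times M$ is trivial, $A=A^{B}\otimes e_B$ and $F=F^B\otimes e_B$ are globally defined once a basis $\{e_B\}$ of $\mathfrak{G}$ is chosen, and all quantities $[A_\mu,A_\nu]$, $[A_\mu,F^{\mu\nu}]$, $T(A_\mu)\phi$ are \emph{at least quadratic} in the unknowns. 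This is precisely the situation in which the implicit function theorem strategy of Sections~\ref{ss2I16.1}--\ref{ss2I17.2} applies without change.

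First, I would linearise the full Einstein--Yang-Mills--Higgs--Chern-Simons system at the reference solution $(\mathring\glorentz, A\equiv 0, \phi\equiv 0)$. At this reference point every bracket and every current $j^\nu$ vanishes together with its linearisation, because each is at least quadratic in the fields by \eqref{3I17.1}. Consequently, the linearised Yang-Mills equation for each component $A^B$ decouples from the other components and coincides with the linearised Maxwell equation treated in Sections~\ref{ss2I16.1} and \ref{ss2I17.2}; likewise the linearised Higgs equation for each component of $\phi$ coincides with the scalar-field linearisation. Thus the same isomorphism theorems used in the Maxwell-scalar proof (Theorem~3.3 of \cite{ChDelayEM} in the electric case, and the operator $\mathcal{P}$ of Section~\ref{ss2I17.2} in the magnetic case, combined with Theorem~\ref{Tisofunction}) apply componentwise, yielding an isomorphism of the full linearised elliptic system on the appropriate weighted H\"older spaces.

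Next, to obtain a well-posed elliptic problem I would introduce, for each $B$, a Lorenz-type gauge-source function $\sigma_F^B:=\nabla^\mu A_\mu^B$ (exactly as in \eqref{10XII16.4}), and solve the gauge-fixed analogue of \eqref{10XII16.3}. The implicit function theorem then delivers, for small $\hat U|_{\partial M}$, $\hat A_a|_{\partial M}dx^a$ and $\hat\phi|_{\partial M}$, a unique solution near the reference point with the prescribed asymptotics \eqref{4XI16.1}, \eqref{5XII16.1}, \eqref{21XII16.1}. To close the argument one must show that each $\sigma_F^B$ actually vanishes. This is where the hypothesis \eqref{2I16.7} is used: because $\mcD_\nu j^\nu=0$ whenever the Higgs equations hold, the identity \eqref{2I17.6} shows that the ordinary divergence $\nabla_\nu(j^\nu-[A_\mu,F^{\mu\nu}])$ vanishes on solutions, and propagating this through \eqref{10XII16.4} yields a homogeneous second-order elliptic equation for $\sigma_F^B$ whose only solution with $\sigma_F^B\to 0$ at $\rho=0$ is zero, again by~\cite[Theorem~3.3]{ChDelayEM}. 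In the electric ansatz $A=U\,dt$ with $\partial_t U=0$ we have $[A_\mu,A_\nu]=0$ identically, so the purely-electric statement reduces literally to the Maxwell-scalar one and goes through without modification.

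The main obstacle is verifying that the non-Abelian bracket contributions, which couple the different $A^B$'s nonlinearly and feed back into the Einstein, Higgs and gauge-fixing equations, can indeed be absorbed as perturbative terms in the implicit function theorem. The key point, which is what \eqref{3I17.1} is designed to ensure, is that all such bracket contributions are at least quadratic in $(A,\phi)$, hence vanish together with their first derivatives at the reference solution; they therefore contribute only to the remainder in the Taylor expansion around $(\mathring\glorentz,0,0)$, and the invertibility of the linearisation, inherited componentwise from the Abelian case, is preserved. Once this is in place, the uniqueness, polyhomogeneous expansion, and boundary-data parametrisation claims of Propositions~\ref{P21XII16.1} and \ref{P21XII16.2} transfer verbatim, proving Proposition~\ref{P6III17.1}.
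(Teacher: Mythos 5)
Your proposal is correct and follows essentially the same route as the paper: decompose the $\mathfrak G$-valued fields componentwise over a basis (possible since the bundle is trivial), observe that all bracket and current contributions are at least quadratic by \eqref{3I17.1} and hence do not affect the linearisation at $(\mathring\glorentz,0,0)$, introduce a gauge-source function for each component, and invoke \eqref{2I16.7} to show the gauge terms vanish. The paper's own argument is exactly this reduction (noting, as you do, that the purely electric ansatz is literally Abelian), so nothing further is needed.
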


Non-trivial bundles can be handled by introducing a suitably regular  background Yang-Mills connection $\mathring A$. This leads to a globally defined $\mathfrak G$-valued one-form $A-\mathring A$, and a corresponding globally defined $\mathfrak G$-valued $\sigma_F$ function
$$
 \sigma_F:= \mathring \mcD^\mu (A_\mu - \mathring A_{\mu}):=  \nabla^\mu (A_\mu - \mathring A_{\mu})
  + [\mathring A^\mu,  A_\mu - \mathring A_{\mu} ]
 \,.
$$
The existence argument goes through  if one moreover assumes that
\begin{enumerate}
  \item there are no covariantly-constant Higgs fields $\phi$ which are in $L^2(S^1\times M)$, and that
      \item there are no $\mathfrak G$-valued harmonic forms which are in $L^2(S^1\times M)$.
\end{enumerate}

\subsection{Time-periodic scalar fields}
 \label{ss19XII16.1}

Let us allow complex-valued $\phi$'s, and assume that
\bel{12I17.1}
 \mbox{$\mcV (\phi)=\GmcV (|\phi|^2)$ and $W(\phi)=\GW (|\phi|^2)$}
\ee
for some differentiable functions $\GmcV$ and $\GW$, with the term $(\nabla \phi)^2 $ in the action replaced by $ \nabla^\alpha \overline{\phi}\nabla_\alpha \phi $, where $\overline{\phi}$ is the complex conjugate of $\phi$.
Considering, as in~\cite{Kaup}, a time dependent field of the form
\bel{21XII16.25}
 \phi(t,x)=e^{i\omega t}\psi (x)
 \,,
 \quad \ \mbox{with} \
 \omega\,,\, \psi(x) \in \R
 \,,
\ee
(compare~\cite{DiasHorowitzSantos})
leads to
\begin{equation}
 \label{19XII16.1}
\left\{\begin{array}{l} V( { -\Delta_g}  V + n V) = -\frac{1}{2}\omega^2 \psi ^2 - 2\GW F_{0i}F_0{}^i + \frac{V^2}{n-1}(\GmcV  - \GW   F_{\alpha\beta}F^{\alpha\beta})
\,,\\
\begin{split}
    R_{ij} + ng_{ij} - V^{-1}\nablariem _i\nablariem _jV
    =&  \frac{1}{2}\partial_i\psi  \partial_j\psi +2F_{i\alpha}F_j{}^\alpha \GW   \\
    &+ \frac{g_{ij}}{n-1}(\GmcV  - \GW  F_{\alpha\beta}F^{\alpha\beta})\,,
\end{split}
\\
\frac{1}{V\sqrt{\det g}}\partial_j (V \sqrt{\det g} \GW   F^{j\nu})+ B_{\mathrm{CS}}^\nu=0\,,
\\
\frac{1}{V\sqrt{\det g}}\partial_i (V \sqrt{\det g}g^{ij}\partial_j \psi) + V^{-2} \omega^2 \psi - \big( \GW'|F|^2
 + \GmcV'\big) \psi =0
 \,,
\\
\GW F_{0j}F_i{}^j=0
 \,.
\end{array}\right.
\end{equation}
We see that the indicial exponents for the system remain unchanged, so that the existence and uniqueness theory with $\omega=0$, presented above, applies without changes for all sufficiently small $\omega\in \R$:

\begin{Proposition}
  \label{P21XII16.3}
 The conclusions of Propositions~\ref{P21XII16.1} and \ref{P21XII16.2} hold for all sufficiently small $\omega \in \R$ where $\phi$ takes the form \eq{21XII16.25} with $\hat \phi = e^{i\omega t} \hat \psi$, and where  $\hat \psi$ is smooth up-to-boundary.
\qed
\end{Proposition}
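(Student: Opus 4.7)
The plan is to view \eq{19XII16.1} as a smooth one-parameter family in $\omega$ of autonomous elliptic systems on $M$ for the time-independent unknowns $(V,\theta,g,A,\psi)$, and to apply the implicit function theorem around the $\omega=0$ solution already constructed in Propositions~\ref{P21XII16.1} and \ref{P21XII16.2}.

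First I would check that, under the ansatz \eq{21XII16.25} with $\psi$ real-valued and the hypothesis \eq{12I17.1}, the full modified Einstein--matter system (with the harmonic-map term $\Omega_\mu$ of \eq{8XII16.14} and the Maxwell gauge-fixing $\sigma_F$ of Section~\ref{ss18XI16.2} adjoined as before) remains autonomous on $M$. Indeed $\GmcV$ and $\GW$ depend only on $|\phi|^2=\psi^2$, while the matter stress-energy is built from the symmetrised product $\tfrac12(\partial_\alpha\overline{\phi}\,\partial_\beta\phi+\partial_\beta\overline{\phi}\,\partial_\alpha\phi)$; the imaginary cross-terms cancel, leaving coefficients that depend only on $\psi$, $\partial\psi$, $g$, $A$ and $\omega$. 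The resulting system has the same principal part as in the static case of Propositions~\ref{P21XII16.1}--\ref{P21XII16.2}, and differs from it only by two new zero-order contributions: $-\tfrac12\omega^2\psi^2$ in the $V$-equation and $V^{-2}\omega^2\psi$ in the $\psi$-equation.

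Next I would verify that these $\omega$-dependent terms do not alter the indicial analysis. Near $\partial M$ one has $V\sim\rho^{-1}$, so $V^{-2}\omega^2\psi=O(\rho^2)\psi$ is strictly subleading compared to the mass term $\GmcV'(0)\psi$ governing \eq{1XII16.4}; the term $\omega^2\psi^2$ is quadratic in $\psi$ and drops out of the linearisation at $\psi=0$. Consequently the indicial exponents $\sigma_\pm$ and the isomorphism provided by Theorem~\ref{Tisofunction} --- together with the Hodge-type isomorphism for the operator $\mathcal{P}$ used in the magnetic case --- persist without modification, and the linearisation of the modified system in $(V,\theta,g,A,\psi)$ at $\omega=0$, evaluated around $\mathring{\glorentz}$ with trivial matter, coincides with the isomorphism already established in Propositions~\ref{P21XII16.1}--\ref{P21XII16.2}.

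The implicit function theorem in the enlarged parameter space $(\omega,\hat U|_{\partial M},\hat\psi|_{\partial M})$, or its magnetic analogue $(\omega,\hat A_a|_{\partial M}\,dx^a,\hat\psi|_{\partial M})$, based at the origin, then delivers the desired family of solutions for all sufficiently small $|\omega|$, with uniqueness in a neighbourhood of $\mathring{\glorentz}$; the asymptotic expansion \eq{5XII16.1} for $\psi$ is inherited from the $\omega=0$ analysis, and the relation $\hat\phi=e^{i\omega t}\hat\psi$ follows by construction. The arguments that $\Omega\equiv 0$ (Appendix~\ref{A29XII16.1}) and $\sigma_F\equiv 0$ (Section~\ref{ss18XI16.2}) both carry over without change, since they rely only on diffeomorphism invariance of the matter action and on the on-shell conservation of $T$ and of the Maxwell current, both of which are preserved by the substitution \eq{21XII16.25}. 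The only genuinely technical step is the indicial check above; once that is noted, the remainder is a routine perturbation from $\omega=0$, which I expect to be the mildest part of the argument.
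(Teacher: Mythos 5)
Your proposal is correct and follows essentially the same route as the paper: the paper's argument is precisely that the system \eq{19XII16.1} remains a time-independent elliptic system whose indicial exponents are unchanged (the new terms $-\tfrac12\omega^2\psi^2$ and $V^{-2}\omega^2\psi$ being respectively quadratic and of order $O(\rho^2)$ relative to the mass term), so the $\omega=0$ existence and uniqueness theory applies verbatim for small $\omega$. Your write-up merely makes explicit the cancellation of the imaginary cross-terms and the persistence of the $\Omega\equiv 0$ and $\sigma_F\equiv 0$ arguments, which the paper leaves implicit.
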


Recall that in~\cite{AstefaneseiRadu} similar Einstein-scalar field solutions have been constructed with, however, $\mcV \equiv 0$. Our condition $\mcV'' <0$ is evaded there by letting $-\omega^2$  be an eigenvalue of the operator $\psi\mapsto VD_i(VD^i\psi)$. It would be of interest to provide a proof of existence of such solutions using our methods; compare~\cite{BizonWasserman} for the spherically symmetric asymptotically flat case. We plan to return to this in a near future.

The discussion of the finiteness of energy of the resulting field configurations, to be found in Section~\ref{s10I17.1}, is identical to the $\omega=0$ case.

\subsection{$f(R)$ theories}
 \label{ss5III17.1}

It is well known that $f(R)$ theories can be reduced to the Einstein-scalar field equations with a specific potential. The object of this section is to show that there exists a class of functions $f$ for which our method applies, providing static or stationary asymptotically AdS solutions.

More precisely, we consider  $f(R)$-vacuum theories as described in~\cite[Section~2.3]{DeFelice2010} (compare~\cite[Equation~32]{KijowskiJakubiecNonlinearPRD}), except for interchanging $g$ and $\tilde g$. We assume that the space-dimension $n$ equals three.
There one starts with the action
\[
S=\frac{1}{2\kappa^2}\int d^4x \sqrt{-\tilde{g}} f(\tilde{R})\,,
\]
and defines $F:=f'(\tilde R)$, assuming $F>0$.
As reported in~\cite{DeFelice2010}, a conformal transformation $g:= F \tilde{g}$,  brings the action to the Einstein-scalar field form
\begin{equation}
S_E=\int d^4 x \sqrt{-g}\left[\frac{1}{2\kappa^2} R -\frac{1}{2} g^{\mu\nu}\partial_\mu\phi\partial_\nu \phi-{\mycal V}(\phi)\right]
 \,,
\end{equation}
where
\[
\phi :=\frac{\sqrt{3/2}}{\kappa}\ln F\,,
\quad {\mycal V}(\phi)=\frac{F \tilde{R} - f(\tilde{R})}{2\kappa^2 F^2}\,.
\]
The question then arises, whether there exist functions $f$ which satisfy the hypotheses on $
\mcV$  set forth in Section~\ref{ss2I16.1}.

To ensure that we can invert $f'$ to obtain $\tilde{R}=f'^{-1}(F)$ we assume $f'' \neq 0$.

Using the definitions of $F$ and $\phi$ gives
\begin{equation}
{\mycal V}(\phi)=\frac{e^{-2\kappa\sqrt{\frac{2}{3}}\phi}}{2\kappa^2}
\left(
	e^{\sqrt{\frac{2}{3}}\kappa \phi}f'^{-1}(e^{\sqrt{\frac 2 3}\kappa\phi})
	- f(f'^{-1}(e^{\sqrt{\frac 2 3} \kappa \phi} ))
\right)\,.
\end{equation}
The first and second derivatives of ${\mycal V}(\phi)$ are given by
\begin{align}
{\mycal V}'(\phi) &= \frac{\sqrt{2/3}}{\kappa} \left(
	e^{-2\sqrt{\frac 2 3}\kappa \phi}f(f'^{-1}(e^{\sqrt{\frac 2 3} \kappa \phi}))
	- \frac 1 2 e^{-\sqrt{\frac 2 3}\kappa \phi}f'^{-1}(e^{\sqrt{\frac 2 3} \kappa \phi})
\right)\,,\\
{\mycal V}''(\phi) &= \frac{1}{3}\Big(
	e^{-\sqrt{\frac 2 3} \kappa \phi} f'^{-1}(e^{\sqrt{\frac 2 3} \kappa \phi})
	- 4 e^{-2 \sqrt{\frac 2 3} \kappa \phi}f(f'^{-1}(e^{\sqrt{\frac 2 3} \kappa \phi}))\\\nonumber
	&\hspace{3em}
	+ \underbrace{(f^{\prime -1})^{ \prime}(e^{\sqrt{\frac 2 3}\kappa \phi})}_{\hspace{-3em}=1/(f''(f'^{-1}(\exp(\sqrt{2/3}\kappa \phi))))\hspace{-3em}}
\Big)\,.
\end{align}
Evaluating  at $\phi=0$ gives
\begin{align*}
{\mycal V}(0) &= \frac{1}{2 \kappa^2}\left(
	f'^{-1}(1) - f(f'^{-1}(1))
\right)\,,\\
{\mycal V}'(0) &= \frac{\sqrt{2/3}}{\kappa}\left(
	f(f'^{-1}(1)) - \frac 1 2 f'^{-1}(1)
\right)\,,\\
{\mycal V}''(0) &= \frac 1 3 \left(
	 f'^{-1}(1) - 4 f(f'^{-1}(1)) + \frac{1}{f''(f^{\prime -1}(1))}
\right)\,.
\end{align*}

Set  $a:=f'^{-1}(1)$.
To obtain a negative cosmological constant in the action for $g$ we need ${\mycal V}(0)<0$ and therefore $f(a)>a$.
The condition
${\mycal V}'(0)=0$ leads to $f(a)=a/2$, hence $a<0$.

Summarising, we require
\begin{equation}
f'>0\,,\quad f''\neq 0
 \,,\quad  f'^{-1}(1)<0
  \,,\quad f(f'^{-1}(1))=f'^{-1}(1)/2
 \,.
\end{equation}

To provide an example of function $f$ which satisfies our requirements, we let
$$
 f(\tilde{R})=d \tilde{R}+c \tilde{R}^{\alpha+1} + e
  \,,
$$
with constants $c$, $d$ and $e$. This gives
\be
F =d+c(\alpha+1)\tilde{R}^\alpha
 \quad
 \Longleftrightarrow
 \quad
 \left(\tilde{R}(F)\right)^\alpha =\frac{F-d}{c(\alpha+1)}\,.
\ee
We consider $d<1, c<0, \alpha=1, 3, 5,\dots,$ and
\[
e=\frac{\alpha(1-2d)-1}{2(\alpha+1)} \sqrt[\alpha]{\frac{1-d}{c(\alpha+1)}}\,,
\]
where $\sqrt[\alpha]{.}$ denotes the real root.
Then $a=\sqrt[\alpha]{\frac{1-d}{c(\alpha+1)}}<0$, ${\mycal V}'(0)=0$ and ${\mycal V}(0)<0$.

Additionally our method requires ${\mycal V}''(0)>-n^2/4=9/4$ (recall that $n=3$ in~\cite{DeFelice2010})
 and we need ${\mycal V}''(0)<0$ to obtain nontrivial solutions for $\phi$ without Maxwell fields. These conditions are fulfilled if
\begin{equation}\label{3II17.1}
-\sqrt[\alpha]{\frac{1-d}{c(\alpha+1)}}+\frac{\left(\sqrt[\alpha]{\frac{1-d}{c (\alpha+1)}}\right)^{1-\alpha}}{c \alpha (\alpha+1)}+\frac{27}{4}
 \equiv\frac{a}{(1-d)\alpha}-a+\frac{27}{4}>0
\end{equation}
and $d>(\alpha-1)/\alpha$. Equation~\eqref{3II17.1} can be satisfied by choosing either $c$ large enough (as the l.h.s is of the form $K/\sqrt[\alpha]{c}+27/4$ for some constant $K$ independent of $c$), or $\alpha$ large enough (as it is of the form $\sqrt[\alpha]{K_1/(\alpha+1)}(1-1/(\alpha K_2))+27/4$ for   positive constants  $K_{1,2}$ independent of $\alpha$).

\section{Stationary {metrics}}
 \label{ss18XI16.2+}
 \label{ss8XII16.3}

{We return to the metric \eq{gme1}-\eq{gme2}.
Let $e^{\hat \mu}$ be the coframe} $e ^{\hat 0}:=  dt + \theta$,  $e^{\hat i}:=dx^i $. The corresponding components  $R_{\hat \mu \hat \nu}$  of the Ricci tensor of $\glorentz $ read (see, e.g.,
\cite{Coquereaux:1988ne})
\begin{equation}\label{mainequation}
\left\{\begin{array}{l}
 R_{\hat 0 \hat 0}
  =  V \Delta_g V +\frac 1{4} |\newF|_g^2  \, ,\\
       R_{\hat i \hat j} = \Ric(g)_{ij}  -V^{-1}\Hess_gV _{ij} + \frac{1}{2V^{2}}\newF_{i  k} \newF_j{}^k\, ,
\\
 R_{\hat 0 \hat j}= -
  \frac 12 V^{-1} \nablariem _i (V \newF^ i{}_j)\, ,
\end{array}\right.
\end{equation}
where
$$
 \newF_{ij}=-V^2(\partial_i \theta_j - \partial_j
\theta_i)
 \, .
$$

For a general energy-momentum tensor $T$ the equations are
%
%
\begin{equation}\label{mainequations_T}
\left\{\begin{array}{l}
V(-\Delta_g V + n V) = \frac 1{4} |\newF|_g^2 - 8\pi G\left(T_{00}+\frac{V^2}{n-1}\Tr_{{}\glorentz }T\right)\, ,\\
\begin{split}
    R_{ij}+n g_{ij}-V^{-1}\nablariem _i\nablariem _j V= \frac{1}{2V^{2}} \lambda_{ik} \lambda^k{}_j
    + 8\pi G \big(& \theta_i \theta_j T_{00} - \theta_i T_{0j} - \theta_j T_{i0}\\
                           & +T_{ij}-\frac{g_{ij}}{n-1} \Tr_{{}\glorentz } T\big)\, ,
\end{split}
\\
\nablariem ^j (V \newF_{ij})= 16 \pi G V  (T_{0i} - \theta_i T_{00})\, .
\end{array}\right.
\end{equation}

When $T$ is given by \eqref{21XI16.1+}) we have
%
$$
 \Tr _{\glorentz } T =- \frac 1 {  8 \pi G} \left(   \frac {n-3} 2 W(\phi) |F|^2 +\frac {n-1} 4 |\nablariem  \phi|^2 + \frac {n+1}2 \mcV (\phi)
  \right)
 \,.
$$
%

Assuming moreover $\partial_t \phi =0$ we obtain
%
%
\begin{equation}\label{mainequations_EMD}
\left\{\begin{array}{l}
V(-\Delta_g V + n V) = \frac{1}{4} |\newF|_g^2 - 2 W F_{0i}F_0{}^i + \frac{V^2}{n-1}({\mycal V} - W |F|^2)\, ,\\
\begin{split}
    R_{ij}+n g_{ij}-V^{-1}\nablariem _i\nablariem _j V=&
     \frac{1}{2V^{2}} \lambda_{ik} \lambda^k{}_j
    + \frac{1}{2}(\partial_i \phi)(\partial_j \phi)+2W F_{i\alpha}F_j{}^\alpha
\\
    &+\frac{g_{ij}}{n-1}({\mycal V}-W |F|^2)
\\
    &- 2 W F_{0k}(F_j{}^k \theta_i + F_i{}^k \theta_j-F_0{}^k \theta_i \theta_j)\, ,
\end{split}
\\
\nablariem ^j (V \newF_{ij})= 4V W F_{0j} (F_i{}^j - F_0{}^j \theta_i)\, .
\end{array}\right.
\end{equation}
The matter equations {remain formally unchanged, as compared to \eq{22XI16.1}, when written in the form
\begin{equation}\label{matterequations}
\left\{\begin{array}{l}
\frac{1}{V\sqrt{\det g}}\partial_\mu (V \sqrt{\det g} W F^{\mu\nu})+B_{\mathrm{CS}}^\nu=0\,,\\
\frac{1}{V \sqrt{\det g}}\partial_i (V \sqrt{\det g} g^{ij} \partial_j \phi) - W'(\phi) |F|^2 - {\mycal V}'(\phi)=0\,,
\end{array}\right.
\end{equation}
with   $B_{\mathrm{CS}}^\nu$ as in \eq{8XII16.3}; indeed, the theta-dependent terms are hidden in $|F|^2$ and $F^{\mu\nu}$.
Letting $F=d(U dt+ A_i dx^i)$ and $\partial_t U = \partial_t A =0$ we have
\begin{equation}\label{matterequations2}
\left\{\begin{array}{l}
\nablariem _j (V W g^{jk}g^{il}(A_{l,k}-A_{k,l}+\theta_k U_{,l}-\theta_l U_{,k}))
 + V B_{\mathrm{CS}}^i=0\,,\\
\nablariem _j (V W g^{jk}(-V^{-2}U_{,k}+g^{lm}\theta_l(\theta_mU_{,k}-\theta_k U_{,m}
 +A_{k,m}-A_{m,k})))
 +V B_{\mathrm{CS}}^0=0\,,\\
V^{-1}\nablariem _i (V g^{ij} \partial_j \phi) - W'(\phi) |F|^2 - {\mycal V}'(\phi)=0\,,
\end{array}\right.
\end{equation}
where
\[
|F|^2 = 2\left[(A_{j,i}-A_{i,j})g^{ik}g^{jl}(A_{l,k}-2U_{,k}\theta_l)+|\nabla U|_g^2(|\theta|_g^2-V^{-2})-(U_{,i} \theta^i)^2\right]\,.
\]

Let $\epsilon >0$. For $\theta$ small in $C^{k,\alpha}_\epsilon$-norm the last two operators in \eqref{matterequations2}, acting on $\phi$ and $U$, are close in norm to the operators considered in Section~\ref{s11XII16.1}, and therefore isomorphisms as discussed there. Hence, in the implicit function argument we can  choose $\hat U$ and $\hat\phi$ as in Section~\ref{s11XII16.1}. It remains to consider the $A$-equation. We set
\beal{1XI16.2}
  \sigma_F
   &\equiv &
   {\nabla }_\mu A^\mu \equiv \frac 1 {\sqrt {\det \glorentz}}\partial_\mu (\sqrt {\det \glorentz} \glorentz^{\mu\nu} A_\nu)
\\
 \nn
  &=& V^{-1}D_i\left(V ( A^i - U \theta^i)\right)
   \,.
\eea
Then the first line of \eqref{matterequations2} can be rewritten as
\beal{1XII16.2}
\phantom{xxxxx}
\lefteqn{ \nablariem _j (V W g^{jk}g^{il}(\nablariem _kA_{l }-\nablariem _l A_{k}+\theta_k U_{,l}-\theta_l U_{,k}))+V B^i_\mathrm{CS}}
 &&
\\
 &=&    \nablariem _j (V W g^{jk}g^{il}(\nablariem _kA_{l }+\theta_k U_{,l}-\theta_l U_{,k}))
 \nn
\\
 &&
 -  \nablariem ^k (V W  ) \nablariem ^i A_{k} -   V W \nablariem _j  \nablariem ^i A^j
 + V B^i_\mathrm{CS}
\nn
\\
 &=&    \nablariem _j (V W g^{jk}g^{il}(\nablariem _kA_{l }+\theta_k U_{,l}-\theta_l U_{,k}))-  \nablariem ^k (V W  ) \nablariem ^i A_{k}
 \nn
\\
 &&
- VW R^i{}_\ell A^\ell -   V W   \nablariem ^i \left(\sigma_F -V^{-1} A^k \nablariem_ kV +
    V^{-1}\nablariem_k(U V\theta^k)\right)
  \nn
\\
 &&
+ V B^i_\mathrm{CS}
 \,.
  \nn
  \eea
%
%

The resulting  linear operator acting on $A$ coincides with the one in \eq{21XII16.2},
so that
the discussion there applies.
Inserting the asymptotic expansions for $V$ and $g$ into \eqref{1XI16.2} gives
\begin{equation}\label{21XII16.4}
\sigma_F=\nabla^\mu A_\mu=\rho^{1+n}\partial_\rho(\rho^{1-n}(A_\rho-\theta_\rho U))+O(\rho)
\end{equation}
and, as $U=O(1)$, setting $\hat A_\rho |_{\partial M} \equiv 0$ guarantees $\sigma_F\equiv 0$, as discussed at the end of Section~\ref{ss18XI16.2}.

The rest of the proof is an application of the implicit function theorem, we sketch the details.
We work with
$$(V,g)=(\mathring V+v,\mathring g+h)
$$
close to $(\mathring V,\mathring g)$.
Keeping in mind that
$$
 \mbox{$\lambda=-V^2d\theta$, $W(\phi)=1+O(\phi)$ and
${\mycal V}'(\phi)={\mycal V}''(0)\phi+O(\phi^2)$,}
$$
the system obtained after the addition of the $\Omega$-terms as in \eqref{8XII16.14}
is of the form:
\begin{equation}
 \label{matterequations2withtheta}
 \left\{\begin{array}{l}
 \mathfrak Q (v,h,\theta)- q_1[v,h,\theta,A,U,\phi]=0
 \,,
\\
 \mathcal P(A)-d\sigma_F-q_2[v,h,\theta,A,U,\phi]=0
 \,,
\\
V\nablariem _j (V^{-1}\nablariem^jU)-q_3[v,h,\theta,A,U,\phi]=0
 \,,
\\
 V^{-1}\nablariem _i (V \nablariem^i \phi)  - {\mycal V}''(0)\phi- q_4[ v,h,\theta,A,U,\phi]=0
 \,,
\end{array}\right.
\end{equation}
where the $q_i$'s are at least quadratic in their arguments and their first derivatives, and where $\mathfrak Q (v,h,\theta)$ corresponds to the operators $(l,L,\mathcal L)$ of~\cite[Corollaries~3.2 and 3.3]{ChDelayStationary}, which are the three components of the operator $\frac12\Delta_L+n$,  with $(l,L)$ and $\mathcal L$ being isomorphisms.
For $s \in \R$ we define, as in~\cite{ChDelayEM}, the operators
$$
\mathcal T_s=V^{-s}\nablariem_i(V^s\nablariem^i\,\cdot).
$$
We consider the modified system \eqref{matterequations2withtheta} with the Maxwell  gauge-term $d\sigma_F$ added:
\begin{equation}\label{matterequations2withthetajauge}
 \left\{\begin{array}{l}
 \mathfrak Q (v,h,\theta)-q_1[v,h,\theta,A,U,\phi]=0
 \,,
\\
 \mathcal P(A)-q_2[v,h,\theta,A,U,\phi]=0\,,
\\
 \mathcal T_{-1}(U)-q_3[v,h,\theta,A,U,\phi]=0\,,
\\
 (\mathcal T_{1} - {\mycal V}''(0))(\phi)- q_4[v,h, \theta,A,U,\phi]=0\,.
 \end{array}\right.
\end{equation}
A solution, close to zero, of the elliptic system \eq{matterequations2withthetajauge}, with prescribed behavior at infinity, can be constructed  in the following way: Let us define
$$
 {\mathcal X}=(v,h,\theta,U,A,\phi)
\,,
$$
we want to solve  an equation of the form
$$\mathcal F({\mathcal X})=0
 \,,
$$
with ${\mathcal X}\sim\hat{{X}}$ at large distance, for a prescribed small
$$
 \hat{{X}} = (\hat v ,\hat h,\hat \theta,\hat U, \hat A, \hat \phi)
$$
(some of the components vanishing if desired).%
We let ${\mathcal X}=\hat{{X}}+{X}$
and define $$\mathfrak F(\hat{{X}},{X}):=\mathcal F (\hat{{X}}+{X}).$$
We have $\mathfrak F(0,0)=0$, with the linearisation
$$D_{{X}}\mathfrak F(0,0)=\text{diag}
 \big(
\mathfrak Q ,
\mathcal P,
\mathcal T_{-1},
 (\mathcal T_{1}- {\mycal V}''(0))
 \big)
$$
being an isomorphism. By the implicit
function  theorem, for all $\hat{{X}}$ small, there exist
a small  $ {{X}}$, depending smoothly on $\hat{{X}}$, such that $\mathcal X$ is a solution of \eq{matterequations2withthetajauge}.

We have already explained why this solution solves the desired original equations.

For completeness we provide examples of functional spaces  where the preceding procedure involving $\mathcal F$
applies. Taking into account the  weights needed so that each  of  the operators involved is an isomorphism,
a natural space for ${X}$ is, without indicating the tensor character of the relevant bundles,
\bel{21XII16.3}
{\mathcal E}^{k+2}:=
C^{k+2,\alpha}_{-1+s}\times C^{k+2,\alpha}_{s}\times C^{k+2,\alpha}_{1+s}
\times C^{k+2,\alpha}_{1+s}
\times C^{k+2,\alpha}_{s}\times C^{k+2,\alpha}_{\sigma_{-}+s},
\ee
where $s$ is greater than and close to zero. The space for
$\hat{ {X}}|_{\partial M} $ is
$$
\rho^{-1}C^{k+2,\alpha}\times
\rho^{0} C^{k+2,\alpha}
\times
\rho^{0}C^{k+2,\alpha}\times
\rho^{0} C^{k+2,\alpha}
\times \rho^{0} C^{k+2,\alpha}\times \rho^{\sigma_{-}}\,C^{k+2,\alpha}\,.
$$
The tensor fields in $C^{k+2,\alpha}(\partial M)$, can then be extended away from  $\partial M$ to smooth tensor fields on $\bar M$ in any
convenient way, keeping in mind the conditions
\bel{12I17.1+}
 \hat A_\rho |_{\partial M} = \hat \theta_\rho |_{\partial M} = \hat h_{\rho i}  |_{\partial M} =0
 \,.
\ee

The reader can check that with the spaces chosen above, both  $\mathfrak F(\hat X,.)$ and  $D_X\mathfrak F(0,0)$  map
 $\mathcal E^{k+2}$ to
 $\mathcal E^{k}$.

We have thus proved:

\begin{theorem}
   \label{T27XII16.1}
Suppose that the Einstein metric $(S^1\times M, \mathring V^2 dt^2 + \mathring g)$ is non-degenerate, has no harmonic one-forms which are in $L^2$, and  $\mcV''(0)$
is not an $L^2$-eigenvalue of the operator $\Delta_{\mathring \griem}$.
Assume that
\bel{8XII16.+1}
 W(0)=1
 \,,
  \quad
  \mcV(0)=0
   =
  \mcV'(0)
  \,,
   \quad
   \mcV''(0)> -n^2/4
  \,,
\ee
\begin{enumerate}
  \item
and $
\mcV''(0) < 0
$
with $\hat \theta_a|_{\partial M} dx^a$, $\hat U|_{\partial M}$, $\hat A_a|_{\partial M} dx^a$, and  $\hat \phi|_{\partial M}$ which are sufficiently small smooth fields on $\partial  M$, or

  \item
$\hat \phi \equiv 0$ and $\hat \theta_a|_{\partial M} dx^a$, $\hat U|_{\partial M}$, and $\hat A_a|_{\partial M} dx^a$ are sufficiently small smooth fields  on $\partial M$,
\end{enumerate}

Then  there exist a solution of the Einstein-Maxwell-dilaton-scalar fields-Chern-Simons equations, or of the Yang-Mills-Higgs-Chern-Simons-dilaton equations with a trivial principal bundle, so that near $\partial M$ we have
\bel{27XII16.1}
 g\to_{\rho \to 0} \mathring g
 \,, \
 V \to_{\rho \to 0} \mathring V
 \,, \ U\to_{\rho \to 0} \hat U
 \,, \ A \to_{\rho \to 0} \hat A_a dx^a
 \,, \ \theta \to_{\rho \to 0} \hat \theta
 \,,
\ee
with all convergences in $\mathring g$-norm. The hypothesis of non-existence of harmonic $L^2$-one-forms is not needed if $\hat A_a|_{\partial M} dx^a \equiv 0 \equiv \hat U|_{\partial M} $, in which case the Maxwell field or the Yang-Mills field are identically zero.
\qed
\end{theorem}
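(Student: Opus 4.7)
The plan is to package the discussion preceding the theorem into a rigorous implicit-function-theorem argument for the modified system \eqref{matterequations2withthetajauge}, and then remove the gauge modifications at the end. First, following the reduction in Section~\ref{ss8XII16.3}, I would write the unknowns as $\mathcal X = \hat X + X$ with $X=(v,h,\theta,U,A,\phi)$ measuring the deviation from the prescribed boundary extension $\hat X=(0,0,\hat\theta,\hat U,\hat A,\hat\phi)$ added to the background $(\mathring V,\mathring g,0,0,0,0)$, and define $\mathfrak F(\hat X,X)$ to be the left-hand side of \eqref{matterequations2withthetajauge}. The gauge modifications are (i) the $\Omega$-terms added to the Einstein equations via \eqref{8XII16.14}, which diagonalise the principal symbol as in \eqref{31XII16.32}, and (ii) the Maxwell gauge term $d\sigma_F$, incorporated via \eqref{10XII16.2}--\eqref{10XII16.4}, and rewritten using the substitution \eqref{1XI16.2} appropriate to the stationary setting. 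By construction $\mathfrak F(0,0)=0$ and the nonlinear residual terms $q_i$ are at least quadratic in $(X,\partial X)$, hence tame as multilinear maps between weighted H\"older spaces.

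The core of the proof is the verification that $D_X\mathfrak F(0,0)$ is an isomorphism on the spaces $\mathcal E^{k+2}\to\mathcal E^{k}$ defined in \eqref{21XII16.3}. Since the linearisation is block-diagonal once the gauge terms are included, it suffices to handle each block separately. For the $(v,h,\theta)$-block I would invoke the isomorphism property of $\mathfrak Q$ established in \cite[Corollaries~3.2 and 3.3]{ChDelayStationary}, applicable by the hypothesis that $(S^1\times M,\mathring V^2 dt^2+\mathring g)$ is non-degenerate. For the $A$-block the operator $\mathcal P$ is an isomorphism on $C^{k+2,\alpha}_s\to C^{k,\alpha}_s$ under the weight restriction \eqref{indicesisoA}, which follows from \cite[Theorem~C(c), Corollary~7.4]{Lee:fredholm} together with the absence of $L^2$-harmonic one-forms. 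The $U$-block is handled exactly as in \cite{ChDelayEM}: $\mathcal T_{-1}$ is an isomorphism at weight $1+s$. Finally, the $\phi$-block involves $\mathcal T_1-\mcV''(0)$, which, by the Appendix result invoked in Section~\ref{s11XII16.1}, is an isomorphism on $C^{k+2,\alpha}_{\sigma_-+s}\to C^{k,\alpha}_{\sigma_-+s}$ provided $\mcV''(0)>-n^2/4$ and $\mcV''(0)$ is not an $L^2$-eigenvalue of $\Delta_{\mathring\griem}$. Choosing $s>0$ small ensures that all four weights lie in their respective isomorphism windows simultaneously.

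Having established that $D_X\mathfrak F(0,0)$ is an isomorphism, the implicit function theorem provides, for every $\hat X$ sufficiently small in the boundary-data space
\[
\rho^{-1}C^{k+2,\alpha}\times\rho^{0}C^{k+2,\alpha}\times\rho^{0}C^{k+2,\alpha}\times\rho^{0}C^{k+2,\alpha}\times\rho^{0}C^{k+2,\alpha}\times\rho^{\sigma_-}C^{k+2,\alpha},
\]
a unique small $X\in\mathcal E^{k+2}$ solving $\mathfrak F(\hat X,X)=0$. In case (2), setting $\hat\phi\equiv 0$ and taking $W'(0)\hat U|_{\partial M}\equiv 0$ produces $\phi\equiv 0$, so the condition $\mcV''(0)<0$ is not needed; this covers the vacuum, Einstein--Maxwell, and pure Yang--Mills subcases, and in particular shows that the $L^2$ one-form hypothesis is irrelevant when $\hat U|_{\partial M}\equiv 0\equiv\hat A_a|_{\partial M}dx^a$.

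It then remains to show that the resulting solution of the modified system is a solution of the original coupled equations, i.e.\ to remove the gauges. By the Bianchi identity and the energy-momentum conservation hypothesis \eqref{8XII16.12}, $\Omega$ satisfies the homogeneous wave-type equation \eqref{8XII16.16}; combined with the boundary decay $|\Omega|_\griem=o(\rho^{-1})$ inherited from the function space $\mathcal E^{k+2}$, the uniqueness result of Appendix~\ref{A29XII16.1} forces $\Omega\equiv 0$. For the Maxwell gauge, the condition \eqref{12I17.1+} on $\hat A_\rho,\hat\theta_\rho,\hat h_{\rho i}$ together with \eqref{21XII16.4} yields $\sigma_F=o(1)$ at $\partial M$, and \eqref{10XII16.4} combined with \cite[Theorem~3.3]{ChDelayEM} (applied to $\nabla^\mu A_\mu$) gives $\sigma_F\equiv 0$. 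The Yang--Mills-Higgs generalisation proceeds identically, using the globally defined $\mathfrak G$-valued $\sigma_F$ defined after Proposition~\ref{P6III17.1}. The main obstacle I anticipate is bookkeeping: ensuring that the quadratic remainders $q_i$ genuinely map $\mathcal E^{k+2}\to\mathcal E^{k}$, in particular that the $\theta\,dU$ and $\theta\,dA$ cross-terms in \eqref{1XII16.2} carry the correct weight; this is a routine but tedious check that the asymptotic behaviour \eqref{4XI16.1}, \eqref{5XII16.1}, \eqref{21XII16.1} is compatible with the indicial roots of each block.
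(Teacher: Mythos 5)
Your proposal follows essentially the same route as the paper: the implicit function theorem applied to the gauge-modified system \eqref{matterequations2withthetajauge}, with the block-diagonal linearisation $\mathrm{diag}(\mathfrak Q,\mathcal P,\mathcal T_{-1},\mathcal T_{1}-\mcV''(0))$ shown to be an isomorphism on the weighted spaces \eqref{21XII16.3} under exactly the stated non-degeneracy and kernel hypotheses, followed by removal of the $\Omega$- and $\sigma_F$-gauge terms via Appendix~\ref{A29XII16.1} and \cite[Theorem~3.3]{ChDelayEM}. The only slight imprecision is in case (2): the theorem does not impose $W'(0)\hat U|_{\partial M}\equiv 0$ there, and $\phi$ need not vanish identically --- when $W'(0)\neq 0$ it is sourced by $|dU|^2$ and decays like $\rho^{\min(\sigma_+,4)}$ (cf.\ \eqref{18XII16.11}), which suffices because only the $\rho^{\sigma_-}$ mode requires $\mcV''(0)<0$.
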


\begin{remark}
 \label{R24III17.1}
Let us comment on the kernel conditions above.

First, we show in Appendix \ref{s29XII16.2} that the condition of non-existence of $L^2$-harmonic forms is satisfied near anti-de Sitter space-time in any case.

Next, it has been shown by Lee~\cite[Theorem A]{lee:spectrum} that there are no $L^2$-eigenvalues of $\Delta_{\mathring \griem}$
when the Yamabe invariant of the conformal infinity is
positive, in particular near anti-de Sitter space-time. Furthermore, and quite generally,  $\mcV''(0)=0$ is never an eigenvalue; this is, essentially, a consequence of the maximum principle. Finally, again quite generally, the $L^2$ spectrum of $-\Delta_{\griem}$ for asymptotically hyperbolic manifolds
is $[n^2/4,+\infty[$ together with possibly a finite set of eigenvalues, with finite multiplicity, between $0$ and $n^2/4$~\cite{Guillarmou} (compare~\cite{MazzeoMelrose}),
so our non-eigenvalue  condition is true except for at most a finite number of values of $\mcV''(0)\in (-n^2/4,0)$ for all asymptotic geometries.
%
\qed
\end{remark}

\begin{remark}\label{remkillvariable}
Some comments on properties of the solutions  are in order.

The case $ (\hat v|_{\partial M} ,\hat h_{ab}|_{\partial M}  dx^adx^b,\hat \theta_a|_{\partial M} dx^a) =0$  leads to a solution with the usual AdS conformal boundary when $\mathring \glorentz$ is taken to be the AdS metric.

The case  $ \hat \theta_a|_{\partial M} dx^a=\hat U |_{\partial M} = \hat A_a|_{\partial M} dx^a= \hat \phi |_{\partial M} =0$ with $|\hat v|_{\partial M} |+|\hat h_{ab}|_{\partial M}  dx^adx^b|\ne 0$ leads to the non-trivial vacuum configurations constructed in~\cite{ChDelayStationary}.

Note that $\hat \theta_a|_{\partial M} dx^a \equiv 0$  but $ \hat U \not \equiv 0$ and $ \hat A \not \equiv 0$ might lead to a solution with $  \theta\not \equiv 0$ because the off-diagonal terms of the Maxwell energy-momentum tensor will drive a non-zero $\theta$.

If we choose $\hat A_a|_{\partial M} dx^a \equiv 0$,   $\hat U|_{\partial M} \equiv 0$ and  $\hat \theta_a|_{\partial M} dx^a\equiv 0$ we obtain, from uniqueness of solutions,  static solutions with scalar fields.

The vanishing of $\phi|_{\partial M}$  will lead to $\phi= 0$ everywhere only if
$W'(0)=0$, since otherwise the equation for $\phi$ is non-homogeneous.
\qed
\end{remark}

The discussion of Section~\ref{ss5III17.1} applies in the stationary case without changes, so  that Theorem~\ref{T27XII16.1} also leads to solutions of $f(R)$ theories for the class of functions $f$ described there.

\subsection{Time-periodic scalar fields}
 \label{ss11I17.1}
 Consider  time-periodic scalar fields of the form
\begin{equation}
 \label{12I17.11}
  \phi(t,x)=e^{i\omega t}\psi (x)
 \,,
 \
 {\omega  \in \R}
 \,,
\end{equation}
as done in the static case in section \ref{ss19XII16.1}, but where now $\psi(x)$ is allowed to be complex. Assume for simplicity that all Maxwell fields are Abelian. Using the notation \eq{12I17.1}, and adapting the action as before gives for the Einstein equations
\begin{equation}\label{mainequations_EMD_periodicscalar}
\renewcommand{\arraystretch}{2}
\left\{\begin{array}{l}
V(-\Delta_g V + n V) = \frac{1}{4} |\newF|_g^2 -\frac{1}{2}\omega^2|\psi|^2 - 2 \GW F_{0i}F_0{}^i + \frac{V^2}{n-1}(\GmcV- \GW |F|^2)\, ,
\\
\begin{split}
    R_{ij}+n g_{ij}-& V^{-1}\nablariem _i\nablariem _j V=
     \frac{1}{2V^{2}} \lambda_{ik} \lambda^k{}_j
    + \frac{1}{2}\Re(\partial_i \psi \partial_j \bar{\psi})\\
    &+\frac{1}{2}(\theta_i\theta_j \omega^2 |\psi|^2 - \omega\theta_i \Im(\bar{\psi}\partial_j\psi)-\omega\theta_j\Im(\bar{\psi}\partial_i \psi))\\
    &+2\GW F_{i\alpha}F_j{}^\alpha+\frac{g_{ij}}{n-1}(\GmcV-\GW |F|^2)
\\
    &- 2 \GW F_{0k}(F_j{}^k \theta_i + F_i{}^k \theta_j-F_0{}^k \theta_i \theta_j)\, ,
\end{split}
\\
V^{-1}\nablariem ^j (V \newF_{ij})=4\GW F_{0j} (F_i{}^j - F_0{}^j\theta_i)+\omega\Im(\bar{\psi} \partial_i\psi)-\theta_i\omega^2 |\psi|^2 \, ,
\end{array}\right.
\end{equation}
where $\Im$ denotes taking the imaginary part.
The matter equations are then
\begin{equation}\label{matterequations2_periodicscalar}
\renewcommand{\arraystretch}{2}
\left\{\begin{array}{l}
\nablariem _j (V \GW g^{jk}g^{il}(A_{l,k}-A_{k,l}+\theta_k U_{,l}-\theta_l U_{,k}))
 + V B_{\mathrm{CS}}^i=0\,,\\
 \begin{split}
\nablariem _j (V \GW g^{jk}(-V^{-2}U_{,k}+g^{lm}\theta_l(\theta_mU_{,k}-\theta_k U_{,m}
 +A_{k,m}-A_{m,k})))&
\\
 +V B_{\mathrm{CS}}^0&=0\,,
 \end{split}\\
\begin{split}
&V^{-1}D_i (Vg^{ij}\partial_j \psi)
- \big( \GW'|F|^2
 + \GmcV'\big) \psi\\
&\qquad+(V^{-2}-\theta_k\theta^k) \omega^2 \psi+ i\omega (\theta^j \partial_j\psi+V^{-1}D_j(V\theta^j \psi))=0\,.
 \end{split}
\end{array}\right.
\end{equation}

It should be clear that all our previous arguments apply to this system of equations, leading to

\begin{Proposition}
  \label{P21XII16.3stationary} Assuming \eq{12I17.1}, the conclusions of Theorem~\ref{T27XII16.1} concerning the Einstein-Maxwell-Chern-Simons-dilaton-scalar fields
  hold for all sufficiently small $\omega \in \R$ and  $\hat \psi|_{\partial M}$ when $\phi$ takes the form \eq{12I17.11}
with $\hat \phi = e^{i\omega t} \hat \psi$,  where  $\hat \psi$ is smooth up-to-boundary.
\qed
\end{Proposition}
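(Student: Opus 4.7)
The plan is to repeat the proof of Theorem~\ref{T27XII16.1} verbatim, treating $\omega$ as an additional small real parameter and allowing the scalar field to become complex-valued. Writing $\mathcal X=(v,h,\theta,U,A,\psi)$ with $(V,g)=(\mathring V+v,\mathring g+h)$ and inserting the Ansatz $\phi(t,x)=e^{i\omega t}\psi(x)$, one supplements the system \eqref{mainequations_EMD_periodicscalar}--\eqref{matterequations2_periodicscalar} with the harmonic-coordinate correction $\Omega_\nu$ of \eq{8XII16.15} and the Maxwell gauge term $d\sigma_F$ exactly as in Section~\ref{ss8XII16.3}, obtaining an elliptic system now parameterised by $(\omega,\hat X)$.

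The crucial observation is that all $\omega$-dependent contributions in \eqref{mainequations_EMD_periodicscalar}--\eqref{matterequations2_periodicscalar} are lower-order relative to the principal parts of the governing operators: the terms $\omega^2|\psi|^2$ and $\omega\,\theta\,\Im(\bar\psi\partial\psi)$ in the Einstein equations, the source $\omega\Im(\bar\psi\partial_i\psi)-\theta_i\omega^2|\psi|^2$ in the twist equation, and the zeroth- and first-order terms $(V^{-2}-\theta_k\theta^k)\omega^2\psi+i\omega(\theta^j\partial_j\psi+V^{-1}D_j(V\theta^j\psi))$ in the $\psi$-equation are all at least linear in $\omega$ and are either of order zero or of first order with $\theta$-coefficients. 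Consequently they modify neither the principal symbols $\mathfrak Q$, $\mathcal P$, $\mathcal T_{-1}$, $\mathcal T_1-\mcV''(0)$ nor the indicial exponents at $\partial M$; the complex-valuedness of $\psi$ merely doubles the real dimension of the last factor in \eqref{21XII16.3}. At $(\omega,X)=(0,0)$ the linearisation of the modified system in $X$ therefore coincides with the block-diagonal isomorphism of Theorem~\ref{T27XII16.1} mapping $\mathcal E^{k+2}$ to $\mathcal E^k$.

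The implicit function theorem, applied with respect to $X$ with $(\omega,\hat X)$ as parameters, then yields for all sufficiently small $(\omega,\hat X)$ a unique small $X$ depending smoothly on these data, hence a solution of the modified system. The gauge-reduction arguments of Section~\ref{ss18XI16.2} and Appendix~\ref{A29XII16.1} are insensitive to the detailed form of the scalar field and apply unchanged to conclude $\Omega_\nu\equiv 0$ and $\sigma_F\equiv 0$, so that the solution actually solves the original stationary Einstein-Maxwell-Chern-Simons-dilaton equations coupled to the time-periodic scalar field equation, with boundary behaviour \eq{27XII16.1} and $\psi\to\hat\psi$ (in the sense of \eq{5XII16.1}) as $\rho\to 0$.

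The only point requiring vigilance — and what I expect to be the most delicate routine check — is to confirm that the $\omega$-corrections to the $\psi$-equation, together with the mixed $\omega\theta$-source terms in the Einstein system, define genuinely small perturbations in the operator norm between the weighted spaces \eqref{21XII16.3}. This is immediate from $|\omega|$ being small, from $\theta$ being small in the appropriate $C^{k,\alpha}_\epsilon$-norm, and from the boundary decay $V^{-2}=O(\rho^2)$ which renders $\omega^2 V^{-2}\psi$ a compactly well-behaved contribution in the weight $\sigma_-+s$; hence continuity of $D_X\mathfrak F$ at $(\omega,X)=(0,0)$ holds and the implicit function theorem applies as stated.
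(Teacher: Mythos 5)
Your proposal is correct and follows essentially the same route as the paper: the authors likewise observe that the $\omega$-dependent terms in \eqref{mainequations_EMD_periodicscalar}--\eqref{matterequations2_periodicscalar} are lower-order perturbations that leave the principal parts and indicial exponents unchanged, so that the implicit-function-theorem argument and gauge-reduction steps of Theorem~\ref{T27XII16.1} apply verbatim for small $\omega$. Your write-up is in fact more detailed than the paper's one-line justification, but the underlying argument is identical.
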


\section{Asymptotics and energy}
 \label{s10I17.1}

Whatever follows applies to the  static, or stationary, or time-periodic solutions constructed above.

All our solutions have a polyhomogeneous expansion at $\partial M$,  that is, expansions in terms of integer powers of $\ln\rho$ and of suitable powers of $\rho$, as determined by all indicial exponents.  It is straightforward but tedious to obtain a detailed description of the asymptotic behavior by inserting polyhomogeneous expansions in the equations and comparing coefficients.

We emphasise that non-integer indicial exponents $\sigma_\pm$ for the scalar fields (cf.~\eq{1XII16.4}) will introduce non-integer powers of $\rho$ in asymptotic expansions for small $\rho$ of all fields involved unless some miraculous cancellations occur. Logarithms of $\rho$ are expected in the expansion for generic solutions regardless of whether or not $\sigma_\pm$ are in $\mathbb Z$.

We have already described the leading-order behaviour of the matter fields. Concerning the metric, the following holds:
In the vacuum case, from what has been said it follows that the metrics under consideration have the following asymptotic behavior
\beal{28III17.1}
 &
 \rho  V-   \hat V=O(\rho^2)
  \,,
  \quad
  \theta_i- \hat\theta_i=O(\rho )
   \,,
 \quad
\rho^2g_{ij}- \hat{g}_{ij}=O(\rho^2)
 \,,
\eea
where all the fields at the left-hand sides have smooth limits at $\rho=0$ in local coordinates near the boundary, and where we have extended the fields $\hat V$, $\hat \theta$, and $\hat g$ from the conformal boundary to the interior by requiring them to be $\rho$-independent in some arbitrarily chosen coordinate system $(\rho,x^A)$ near the boundary.

In the presence  of matter, we  have to look at the error terms arising from the energy-momentum tensor.
On constant-$t$ slices the matter energy-density ${ T_{\hat 0 \hat 0} }\equiv V^{-2} T_{00}$ reads
\beal{21XI16.1++}
     {T_{\hat 0 \hat 0}} =\frac{1}{16\pi G}\big[ \big(\frac{1}{2}(V^{-2}\partial_0 \phi \partial_0 \phi +|d\phi|_g^2 )+  {\mycal V}
 + W (4 V^{-2}
           F_{0\mu}F_0{}^\mu
 +|F|^2)\big)\big]\,.
 \nn
\eea
 As in~\cite{ChDelayEM}, the total energy content of  non-trivial Maxwell {or Yang-Mills} fields will be finite only in space-dimensions $n=3$ and $n=4$.  If $\hat \phi|_{\partial M} \not \equiv 0$ the $\phi$-contribution to $\rho$ behaves as $\rho^{2\sigma_-}$, which will lead to a finite total energy of the scalar field if and only if
 \bel{21XII16.41}
  -n^2 < 4 \mcV''(0) < -n^2 +1
  \,.
 \ee
If $\hat \phi |_{\partial M} \equiv 0$  then either $\phi \equiv 0$ when there is no Maxwell field and so the space-time is vacuum, or there is a Maxwell field in which case the $\phi$-contribution to the energy is
\bel{18XII16.11+}
 \left\{
         \begin{array}{ll}
           O(\rho^{2\sigma_+}) & \hbox{if $\sigma_+< 4$;} \\
           O( \rho^{8}\ln^2 \rho) & \hbox{if $\sigma_+=4$,}
         \end{array}
       \right.
\ee
which gives a finite integral in either case.\\

 This behavior of the energy-momentum tensor  will affect the asymptotics \eq{28III17.1} only if  1) $0<\sigma_-<1$, in which case the relative
corrections
 of the metric components as above will be of order  $O(\rho^{2\sigma_-})$
in place of $O(\rho^2)$,
e.g.
$$
 V = \rho^{-1}\hat V( 1 + O(\rho^{2\sigma_-}))
  \,,
$$
 etc.;
or
if 2) $\sigma_-=1$, which would lead to relative corrections $O(\rho^{2\sigma_-}\ln \rho)$.

We note that the requirement of a well-defined Hamiltonian mass of the metric reads~\cite{ChNagy}, taking the previous constraint $0<\sigma_-<1$ into account,
\bel{31III17.1}
\min({2, 2\sigma_-}) >   n/2
 \,,
\ee
which is satisfied by our solutions with non-trivial scalar field with $\hat \phi\rho^{\sigma_-}$ asymptotic behaviour and with $\hat \phi \not \equiv 0$ if and only if
\bel{31III17.2}
 -n^2 < 4 \mcV''  (0) < - \frac{3 n^2} 4
 \,,
\ee
with $n=3$. On the other hand, all solutions with $\hat \phi \equiv 0$ have well defined and finite Hamiltonian mass.

The reader will note that, in all our solutions, only the scalar fields can possibly affect the leading-order asymptotics \eq{28III17.1} of the metric.

\bigskip

\appendix

\section{The $\Omega$ equation}
 \label{A29XII16.1}

 {Consider  \eq{8XII16.15}:
\bel{8XII16.16++}
(^\glorentz P \Omega)_{\nu}:=  \nabla^\mu(\nabla_\mu \Omega_\nu + \nabla_\nu \Omega_\mu
  -  \nabla^\alpha \Omega_\alpha \glorentz _{\mu\nu}
    ) = 0
 \,.
\ee
}
We have
\beal{27XII16.21}
 \lefteqn{-(\,^{\glorentz}P\Omega)^\nu=
 \nabla_{\mu }(\nabla^\mu \Omega^\nu + \nabla^\nu \Omega^\mu)
   {-  \nabla^\nu \nabla^\alpha \Omega_\alpha }
 }
&&
\\
 &= &
\nn
  \nabla_{\mu }(\nabla^\mu \Omega^\nu - \nabla^\nu \Omega^\mu+ 2 \nabla^\nu \Omega^\mu)
   {-  \nabla^\nu \nabla^\alpha \Omega_\alpha }
\\
 & = &
  \nabla_{\mu }(\nabla^\mu \Omega^\nu - \nabla^\nu \Omega^\mu)+   \nabla^\nu \nabla_{\mu }\Omega^\mu
 + 2 R^\nu{}_\mu \Omega^\mu
 \nn
\\
 & =  &
  \frac 1 {\sqrt{|\det \glorentz |}} \partial_{\mu }\big(\sqrt{|\det \glorentz|}\glorentz^{\mu\alpha}\glorentz^{\nu\beta}
    (\partial_\alpha \Omega_\beta - \partial_\beta \Omega_\alpha)\big)
    \nn
\\
\nn
 &&
    +   \glorentz^{\nu\alpha}\partial_\alpha\Big[\frac 1 {\sqrt{|\det \glorentz |}}\partial_\mu\big( \sqrt{|\det \glorentz|}
        \glorentz^{\mu\beta}\Omega_\beta\big)
        \Big]
 + 2 R^\nu{}_\mu \Omega^\mu
 \,.
\eea
In the static case, with a time-independent $\Omega_\nu$ and an Einstein $(n+1)$-dimensional metric $\glorentz$ normalised so that $R_{\mu\nu} = -n \glorentz_{\mu\nu}$,
 this reads
\beal{28XII16.1}
 \lefteqn{
 \nabla_{\mu }(\nabla^\mu \Omega_0 + \nabla_0 \Omega^\mu)
  = \glorentz_{00}
 \nabla_{\mu }(\nabla^\mu \Omega^0 + \nabla^0 \Omega^\mu)
 }
&&
\\
 \nn
 & =  & \glorentz_{00}
  \frac 1 {\sqrt{|\det \glorentz |}} \partial_{i }\big(\sqrt{|\det \glorentz|}\glorentz^{i j}\glorentz^{00}
    \partial_j \Omega_0 \big)
 - 2 n  \Omega_0
\\
 \nn
 & =  &
  \frac V { \sqrt{\det g }} \partial_{i }\big(V^{-1} \sqrt{\det g}g^{i j}
    \partial_j \Omega_0 \big)
 - 2 n  \Omega_0
\\
 \nn
 & =  &
  V D_{i }(V^{-1} D^i \Omega_0 )
 - 2 n \Omega_0
\,,
\\
 \lefteqn{
 \nabla_{\mu }(\nabla^\mu \Omega_k + \nabla_k \Omega^\mu)
  =
 }
&&
\\
 & =  &
  \frac 1 {\sqrt{|\det \glorentz |}} \glorentz _{k \ell} \partial_{i }\big(\sqrt{|\det \glorentz|}\glorentz^{i j}\glorentz^{\ell m}
    (\partial_j \Omega_m - \partial_m \Omega_j)\big)
    \nn
\\
\nn
 &&
    +   \partial_k \Big[\frac 1 {\sqrt{|\det \glorentz |}}\partial_i\big( \sqrt{|\det \glorentz|}
        \glorentz^{ij}\Omega_j\big)
        \Big]
 - 2 n  \Omega_k
 \,.
\eea
One notices that  $^\glorentz P$ coincides with its Riemannian analogue $^\griem P$ when mapping covectors to covectors:
$$
 ^\glorentz P = ^\griem \!\!  P
 \,.
$$
In particular if
\bel{28XII16.3}
 \nabla_\mu ( \nabla^\mu \Omega^\nu + \nabla^\nu \Omega^\mu
  {-  \glorentz^{\mu\nu} \nabla^\alpha \Omega_\alpha }) =0
\ee
in the Lorentzian metric, then the same
remains true in the Riemannian one.

Consider a one-form $\Omega$ which is in $L^2$ and satisfies
{
$$^\griem P \Omega_\nu  \equiv \nabla_\mu \nabla^\mu \Omega_\nu + R^\mu{}_\nu \Omega_\mu =0
 \,.
$$
}
Multiplying \eq{28XII16.3} by $-\griem ^{\nu\mu}\Omega_\mu$, and integrating by parts over $S^1\times M$ \emph{using $\griem$ everywhere} one finds
\beal{28XII16.4}
 \phantom{xxx}
 0 = \int_{S^1\times M} \nabla_\mu \Omega_\nu \nabla^\mu \Omega^\nu
  + n \Omega^\alpha \Omega_\alpha
  \equiv
 \int_{S^1\times M} |\nabla \Omega|^2_\griem + n | \Omega |^2_\griem
 \,,
\eea
where the vanishing of the boundary term follows from elliptic regularity and density arguments. Thus $ \Omega _{\mu }\equiv 0$; equivalently, the $L^2$ kernel of {$^\griem P$} is trivial.

As $^\griem P$ is an elliptic, formally
self-adjoint operator, geometric in the sense of~\cite{Lee:fredholm},  it is  an isomorphism from $C^{k+2,\alpha}_\delta$ to $C^{k,\alpha}_\delta$ by the results in this last reference
when
$$
    \left|\delta-\frac {n} 2\right|<  \frac{n }2 + 1
 \,.
$$
Equivalently, in local coordinates, the indicial exponents   belong to $\{-2,n\}$.
In particular any $C^{0,\alpha}_{-1+\epsilon}$  one-form, with $\epsilon>0$, in the kernel of $ \,^{\griem}P $
is trivial.  In our applications, we will have $\Omega$ in the kernel with $\Omega\in C^{k+1,\alpha}_{s}$
where  $s$ is positive close to zero when $(\hat v,\hat h)=0$ and $s=0$ if
$(\hat v,\hat h)\neq0$,  which guarantees the vanishing of $\Omega$.

The usual perturbation arguments show that the above conclusions will remain true for all sufficiently small $\theta$.

The alert reader will have noted that the above discussion goes through for all negative-definite Ricci tensors such that $R_{0i}=0$.

\bigskip

\section{{The Christoffels}}
 \label{sA4I17.1}

For further reference,  and to get insight into \eq{8XII16.15},  we compute the Christoffels symbols of a stationary metric.
 The computations will be  made for a Lorentzian metric but the
change  $V\rightarrow iV$, where $i^2=-1$ gives the results for a Riemannian metric.

We recall the form of the metric considered here:
\beal{gme1a} &\glorentz  = -V^2(dt+\underbrace{\theta_i
dx^i}_{=\theta})^2 + \underbrace{g_{ij}dx^i dx^j}_{=g}\, , & \\ &
\partial_t V = \partial_t \theta = \partial_t g=0\, .
\eeal{gme2a}
Its inverse is
\beal{1XI16.1a}
 \glorentz^\sharp
  & \equiv
   &\glorentz^{\mu\nu}\partial_\mu\partial_\nu
   =
    -(V^{-2} - g^{ij}\theta_i\theta_j) \partial_0^2
   - 2g^{ij}\theta_i \partial_0 \partial_j
  +g^{ij} \partial_i\partial_j
\\
  & = &    - V^{-2}  \partial_0^2
   +g^{ij} (\partial_i- \theta_i\partial_0)(\partial_j- \theta_j\partial_0)
    \,.  \nn
\eea
The Christoffel symbols are then
%
%
%
$$
^\glorentz\Gamma^0_{00}=-\theta^iVD_iV,\, \, \,
^\glorentz\Gamma^k_{00}=VD^kV,
$$
$$
^\glorentz\Gamma^0_{0j}=-(|\theta|^2-V^{-2})VD_jV
+\frac12\theta^k[\partial_j(V^2\theta_k)-\partial_k(V^2\theta_j)],
$$
$$
^\glorentz\Gamma^0_{ij}=-\frac12(|\theta|^2-V^{-2})
[\partial_j(V^2\theta_i)+\partial_i(V^2\theta_j)]
-\theta^l\Gamma_{lij}(g-V^2\theta\otimes\theta)
 \,,
$$
$$
^\glorentz\Gamma^k_{i0}=\theta^kVD_iV
-\frac12 g^{kj}[\partial_i(V^2\theta_j)-\partial_j(V^2\theta_i)]
 \,,
$$
$$
^\glorentz\Gamma^k_{ij}=
\frac12 \theta^{k}[\partial_i(V^2\theta_j)+\partial_j(V^2\theta_i)]+g^{kl}\Gamma_{lij}(g-V^2\theta\otimes\theta)
 \,,
$$
where, as usual, $g^{ij}$ is inverse to $g_{ij}$ and where
$$
\Gamma_{ijk}=\frac{1}{2}(g_{ik,j}+g_{ji,k}-g_{jk,i})\,.
$$

\section{AdS, $\H^n$ and a quotient}
 \label{s29XII16.2}

Let us consider the Poincar\'e ball model of the hyperbolic space $\H^n$.
The hyperbolic space is then the unit ball of $\R^n$ endowed with the hyperbolic metric
$$
\mathring g=\rho^{-2}\delta \, ,
$$
where $\delta $ is the Euclidean metric and
$$
\rho(x)=\frac12(1-|x|^2) \, .
$$
We define
$$
\mathring V=\rho^{-1}-1 \, .
$$
A model for $(n+1)$-dimensional hyperbolic space is
$$
\R\times \H^n \, ,
$$
equipped  with the warped product metric
$$
\mathring\griem=\mathring V^2dt^2+\mathring g \, ,
$$
we thus write as usual
$$
\H^{n+1}=\R\, \,  _{\mathring V^2}\hspace{-1mm}\times\,  \H^n \, .
$$
Anti-de Sitter space is the same manifold with the Lorentzian metric
$$
\mathring\griem=-\mathring V^2dt^2+\mathring g \, .
$$
Let $\Gamma=\mathbb Z \subset \R$ be a discrete subgroup of isometries of the $\R$ factor of $\H^{n+1}$. Then  we can write
$$\Gamma\backslash \H^{n+1}=\mathbb S^1\, \,  _{\mathring V^2}\hspace{-1mm}\times\,  \H^n \, .
$$
Recall that the limit set $\Lambda(\Gamma)$ (see eg.~\cite{Ratcliffe2006},  §12.1 page 573) is a subset of the sphere at infinity of $\H^{n+1}$
consisting of the union of the limits of all the orbits. {We will show that  $\Lambda(\Gamma)$ consists of two points at infinity. It then follows from~\cite[Theorem C]{CarronPedon} that our quotient has no $L^2$ harmonic one-forms.

To justify our claim about $\Lambda(\Gamma)$, consider} the half space model of $\H^{n+1}$ with $(\vec x, y)\in\R^n\times(0,\infty)$
endowed with the metric $ds^2=y^{-2}(|d\vec x|\,^2+dy^2)$. Let $\mathcal H$ be the half sphere
$|\vec x|^2+y^2=1$ with $y>0$. It is well known that the totally geodesic hypersurface $\mathcal H$ with the induced metric $h$ is a model
of the hyperbolic space $\H^n$.
We reparameterise $\H^{n+1}$ with $(\vec u,\sqrt{1-|\vec u|^2})\in\mathcal H$  and $t\in \R$ by
$$
(\vec x, y)=e^t(\vec u,\sqrt{1-|\vec u|^2}).
$$
In the coordinate system $(t,\vec u)\in\R\times B^n(0,1)$, the metric becomes
$$ds^2=(1-|\vec u|^2)^{-1}dt^2+h.$$ Comparing the Ricci curvature of the Einstein metrics $ds^2$ and $h$ (see appendix of~\cite{ChDelayEM}),
we see that the radial function $V= (1-|\vec u|^2)^{-1/2}$ on $\mathcal H$ satisfies Hess$_h\,V=V\,h$. As we also have  min$\,V=1$,
the space $\H^{n+1}=(\R\times\mathcal H, ds^2)$ is then $(\R\times\H^n,\mathring\griem)$.

Now for $\vec u $  fixed with $t\rightarrow\pm\infty$ the point $(\vec x, y)$
tends to $0$ or infinity. These correspond to two {antipodal} points of the sphere at infinity in the Poincar\'e
ball model.

\section{An isomorphism on functions}
 \label{sA16III17.1}

Let $s$ and $\lambda$ be real numbers. We will need an isomorphism property for the following operator  acting on functions:
$$
\sigma\mapsto ({\mathcal
T}_s+\lambda)\sigma:=V^{-s}\nabla^i(V^{s}\nabla_i\sigma)+\lambda\sigma=
\nabla^i\nabla_i\sigma+sV^{-1}\nabla^iV\nabla_i\sigma+\lambda \sigma\,.
$$

We note that, whenever $V^2d\varphi^2+g$ is a static asymptotically
hyperbolic metric on $\mbbS^1\times M$, it holds that
\bel{30XI16.1}
 \mbox{$V^{-2}|dV|^2\to 1$ and $V^{-1}\nabla^i\nabla_iV\to n$}
\ee
as the conformal boundary is approached.

\begin{theorem}\label{Tisofunction}
{
Let $(M,g)$ be an $n$-dimensional Riemannian manifold with an asymptotically hyperbolic metric with $V>0$  and assume that \eq{30XI16.1} holds.
}
%
Let $s\neq 1-n$ and  $\lambda<(\frac{s+n-1}2)^2$ suppose that
 $$
 \left|\delta- \frac{s+n-1}2\right|<\sqrt{\left(\frac{s+n-1}2\right)^2-\lambda}
 \,.
$$
If $T_s+\lambda:=V^{s/2}({\mathcal T}_s+\lambda)V^{-s/2}$ has no $L^2$-kernel, 
then $\mathcal T_s+\lambda$ is an isomorphism from
$C^{k+2,\alpha}_{\delta}( M)$ to $C^{k,\alpha}_{\delta}(M)$.
\end{theorem}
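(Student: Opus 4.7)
The plan is to use the conjugation introduced in the statement to reduce $\mathcal{T}_s+\lambda$ to a formally self-adjoint operator on $L^2(M,d\mu_g)$ whose Fredholm and isomorphism theory is directly handled by Lee's results in~\cite{Lee:fredholm}. Concretely, I would set $T_s+\lambda:=V^{s/2}(\mathcal{T}_s+\lambda)V^{-s/2}$ and compute it directly in terms of $\Delta$, $V^{-1}\nabla V$ and $V^{-1}\Delta V$. Using $\mathcal{T}_s \sigma = V^{-s}\nabla^i(V^s\nabla_i\sigma)$ and carrying the derivatives of $V^{-s/2}$ through the formula, the mixed first-order terms cancel and one is left with
\[
T_s u=\Delta u - \frac{s(s-2)}{4}V^{-2}|\nabla V|^2\,u - \frac{s}{2}V^{-1}\Delta V\,u,
\]
so $T_s+\lambda$ has no first-order derivatives and is manifestly formally self-adjoint with respect to $d\mu_g$ (as must be the case, since $\mathcal{T}_s$ is self-adjoint with respect to the weighted measure $V^s\,d\mu_g$).

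The next step is to read off the indicial equation. Using the asymptotics \eqref{30XI16.1}, the coefficient of $u$ in $T_s+\lambda$ converges, as $\rho\to 0$, to the constant $\lambda - \tfrac{s(s-2)}{4} - \tfrac{sn}{2} = \lambda - \tfrac{s(s+2n-2)}{4}$, so the indicial equation on the asymptotically hyperbolic background is
\[
\sigma\bigl(\sigma-(n-1)\bigr) + \lambda - \tfrac{s(s+2n-2)}{4} = 0,
\]
whose roots simplify to $\sigma_\pm^T = \tfrac{n-1}{2}\pm \sqrt{\bigl(\tfrac{s+n-1}{2}\bigr)^2 - \lambda}$. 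The reality of these roots is exactly the condition $\lambda<\bigl(\tfrac{s+n-1}{2}\bigr)^2$, while $s\neq 1-n$ excludes the degenerate case in which the discriminant would reduce to $-4\lambda$ and collapse to a single exponent when $\lambda=0$.

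I would then invoke Theorem~C of~\cite{Lee:fredholm}: $T_s+\lambda$ is a geometric, formally self-adjoint, elliptic operator that is asymptotic at $\partial M$ to $\Delta$ plus a constant on the asymptotically hyperbolic background, so the trivial-$L^2$-kernel hypothesis yields that it is an isomorphism $C^{k+2,\alpha}_{\eta}(M)\to C^{k,\alpha}_{\eta}(M)$ for every $\eta$ strictly between $\sigma_-^T$ and $\sigma_+^T$, equivalently $|\eta-\tfrac{n-1}{2}|<\sqrt{\bigl(\tfrac{s+n-1}{2}\bigr)^2-\lambda}$. Transferring back through the conjugation, and using that multiplication by $V^{\pm s/2}$ is an isomorphism of weighted Hölder spaces which shifts the weight by $\mp s/2$ (since $V\sim\rho^{-1}$ near $\partial M$), the original operator $\mathcal{T}_s+\lambda:C^{k+2,\alpha}_\delta\to C^{k,\alpha}_\delta$ is an isomorphism precisely for $\delta=\eta+s/2$ in the shifted interval, which is the condition
\[
\Bigl|\delta - \tfrac{s+n-1}{2}\Bigr|<\sqrt{\bigl(\tfrac{s+n-1}{2}\bigr)^2-\lambda}
\]
stated in the theorem.

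The main obstacle is verifying cleanly that $T_s+\lambda$ falls within the class of operators handled by~\cite{Lee:fredholm}; this amounts to controlling the remainders $V^{-2}|\nabla V|^2-1$ and $V^{-1}\Delta V-n$ in a suitable uniform sense on the cylindrical model at infinity, which is routine given the asymptotically hyperbolic structure assumed in the hypotheses. The remaining nontrivial input is the $L^2$-kernel hypothesis, which is imposed and which is evidently equivalent for $T_s+\lambda$ and $\mathcal{T}_s+\lambda$ via the same conjugation together with the fact that $V^{s/2}$ is a smooth nowhere-vanishing bounded-below multiplier on compact subsets.
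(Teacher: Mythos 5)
Your proposal is correct and follows essentially the same route as the paper: conjugate by $V^{s/2}$ to obtain the formally self-adjoint operator $T_s+\lambda$ with no first-order term, read off the indicial exponents $\frac{n-1}{2}\pm\sqrt{(\frac{s+n-1}{2})^2-\lambda}$ from the limits \eqref{30XI16.1}, apply Theorem~C(c) of~\cite{Lee:fredholm} under the trivial-$L^2$-kernel hypothesis, and shift the weight by $s/2$ to return to $\mathcal{T}_s+\lambda$. The only point you gloss over is the verification of Lee's asymptotic $L^2$ estimate (condition (1.4)), which the paper establishes by citing the explicit computation after Lemma~3.4 of~\cite{ChDelayEM} with constant $C^{-1}=\frac{(s+n-1)^2}{4}-\lambda$ --- this is where the hypothesis $\lambda<(\frac{s+n-1}{2})^2$ enters analytically, not merely through the reality of the indicial roots.
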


\begin{proof}
The proof is an easy adaptation of the one of Theorem 3.3 in~\cite{ChDelayEM}.
We sketch the steps.
We set $\sigma=V^{-\frac{s}{2}}f$, thus
\bel{ZcalZ}
 {\mathcal
T}_s\sigma=V^{{-}\frac{s}{2}}\left[\nabla^i\nabla_if-\frac s2\left((\frac s2-1)
V^{-2}|dV|^2+V^{-1}\nabla^i\nabla_iV\right)f\right]=:
 V^{{-}\frac{s}{2}}T_sf\,.
\ee
{By assumption} $V^{-2}|dV|^2\to 1$ and
$V^{-1}\nabla^i\nabla_iV\to n$ at the conformal boundary,
leading to the following indicial exponents  {for $T_s+\lambda$:}
$$
  \delta=\frac{n-1}2\pm \sqrt{\left(\frac{s+n-1}2\right)^2-\lambda}\,.
$$
The calculation immediately after Lemma 3.4 of~\cite{ChDelayEM} shows that the
operator $T_s+\lambda$ satisfies condition (1.4) of
\cite{Lee:fredholm},
\bel{Lec1.4}\|u\|_{L^2}\le C \|({
T}_s+\lambda)u\|_{L^2}
 \,,
\ee
for smooth $u$ compactly supported in a
sufficiently small open set ${\mathcal U}\subset M$ such that
$\overline{\mathcal U}$ is a neighborhood of $\partial M$, with
$$
 C^{-1}=\frac{(s+n-1)^2}{4}-\lambda\,.
$$
We conclude using~\cite{Lee:fredholm}, Theorem C(c),
keeping in mind that $V^{-\frac s2}f$ is in $C^{k,\alpha}_{\delta}(M)$
iff $f\in C^{k,\alpha}_{\delta-\frac s2}(M)$, and our hypothesis
that there is no $L^2$-kernel.
\end{proof}

\noindent{\sc Acknowledgements} The research of PTC was supported in
part by the Austrian Research Fund (FWF), Project  P 24170-N16. PK is supported by a uni:docs grant of the University of Vienna.

\bibliographystyle{amsplain}

\bibliography{%
../references/newbiblio,%
../references/reffile,%
../references/bibl,%
../references/prop,%
../references/hip_bib,%
../references/newbib,%
../references/PDE,%
../references/netbiblio,%
stationaryEMD-minimal}
\end{document}